\documentclass[10pt,aps,pra,twocolumn,floatfix,nofootinbib,superscriptaddress,longbibliography]{revtex4-2}

\usepackage{graphicx} 
\usepackage{float}
\usepackage[]{xcolor}
\usepackage{booktabs}
\usepackage{amssymb}
\usepackage{amsmath}
\usepackage{amsthm}
\usepackage{mathtools}
\usepackage{tikz}
\usepackage{quantikz}
\usepackage{bm}
\usepackage{bbold} 
\usepackage{siunitx}
\usepackage{physics}
\usepackage[version=4]{mhchem}
\usepackage{enumitem}
\usepackage{tabularx}
\usepackage[ruled,lined,linesnumbered]{algorithm2e}
\usepackage{mdframed}
\usepackage{bibunits}

\definecolor{LEI-blue}{cmyk}{1,.75,0,.35} 
\definecolor{LEI-orange}{cmyk}{0,.62,.97,0} 
\definecolor{niceblue}{rgb}{.1, .25, .8}

\usepackage[colorlinks=true, linkcolor=niceblue, citecolor=gray]{hyperref}
\usepackage{orcidlink}

\newcommand{\topic}[1]{
}

\newtheorem{theorem}{Theorem}
\newtheorem{definition}[theorem]{Definition}
\newtheorem{lemma}[theorem]{Lemma}
\newtheorem{proposition}[theorem]{Proposition}

\DeclareMathOperator{\sgn}{sgn}

\newcommand{\clnum}{\operatorname{cl}}

\newcommand{\lorentz}[0]{%
	\affiliation{Instituut-Lorentz, Universiteit Leiden, P.O. Box 9506, 2300 RA Leiden, The Netherlands.}
}
\newcommand{\aQa}[0]{%
	\affiliation{$\langle \textit{aQa}^\textit{L} \rangle$ Applied Quantum Algorithms, Leiden University, The Netherlands.}
}

\newcommand{\cqt}[0]{%
	\affiliation{Centre for Quantum Technologies, National University of Singapore, 3 Science Drive 2, Singapore 117543, Singapore}
}

\newcommand{\hkust}[0]{%
	\affiliation{The Hong Kong University of Science and Technology, Clear Water Bay, Kowloon, Hong Kong.}
}

\newcommand{\padova}[0]{%
	\affiliation{Dipartimento di Fisica e Astronomia ``G. Galilei'', Università di Padova, I-35131 Padova, Italy}
	\affiliation{Padua Quantum Technologies Research Center, Università degli Studi di Padova, I-35131 Padova, Italy}
}


\newcommand{\Ftwo}{\mathbb F_2}
\newcommand{\R}{\mathbb R}
\newcommand{\Pcal}{\mathcal P}
\newcommand{\Mcal}{\mathcal M}
\newcommand{\Scal}{\mathcal S}
\newcommand{\STAB}{\mathrm{STAB}}
\newcommand{\stab}{\mathrm{stab}}
\newcommand{\RoM}{\mathrm{RoM}}
\newcommand{\conv}{\mathrm{conv}}
\renewcommand{\tr}{\operatorname{tr}}

\newcommand{\Ind}{\mathcal I}
\newcommand{\Hcal}{\mathcal H}
\newcommand{\Bset}{\mathrm B}

\providecommand{\mathbm}[1]{\bm{#1}}

\newtheorem*{propRelaxationRestatement}{Proposition~\ref{prop:relaxation}}
\newtheorem*{lemmaCollapseRestatement}{Lemma~\ref{lem:collapse}}
\newtheorem*{theoremMwisRestatement}{Theorem~\ref{thm:mwis}}
\newtheorem*{theoremSolvableRestatement}{Theorem~\ref{thm:solvable}}

\begin{document}
	
	\title{Graph Theoretic Approach to Quantum Nonstabilizerness}
	
	\author{Yingjian Liu
		\orcidlink{0009-0007-3872-1160}}
	\email{yingjian@lorentz.leidenuniv.nl}
	\lorentz
	\aQa
	
	\author{Albert Gasull
		\orcidlink{0000-0002-9272-6903}}
	\lorentz
	\aQa
	
	\author{Mengyao Hu
		\orcidlink{0000-0003-2621-3365}}
	\cqt
	
	\author{Ruiyun Zhang
		\orcidlink{0009-0001-4878-4372}}
	\hkust
	
	\author{Flavio Baccari
		\orcidlink{0000-0003-3374-5968}}
	\padova
	
	\author{Jordi Tura
		\orcidlink{0000-0002-6123-1422}}
	\lorentz
	\aQa
	
	\date{\today}
	
	\begin{abstract}
		Detecting nonstabilizerness requires full tomography and an optimization over exponentially many stabilizer states. A limited Pauli measurement set promises resource-efficient magic certification, yet the resulting reduced stabilizer polytope is generally difficult to characterize.
		We trace this difficulty into two coupled obstructions: the  simultaneous measurability of measurements captured by their frustration graph structure, and the consistency  of  sign dependencies from stabilizer formalism.
		We show that the sign dependencies can be discarded exactly whenever active dependencies are absent,  and that perfect frustration graphs then make this reduced polytope efficiently solvable.
		This solvable regime derives a closed form bounded by the clique number of the frustration graph, revealing a tradeoff between witness capacity and simultaneous measurability.
		Clifford covariance allows rotated measurement sets to enlarge the detectable state space without raising the capacity.
		Graph structure therefore emerges as both a certificate of tractability and a design principle for scalable magic resource detection.
	\end{abstract}
	
	\maketitle
	
	\emph{Introduction.}---
	\label{sec:intro}
	Nonstabilizerness, or magic, is the resource that enables quantum processors to go beyond the classically simulable stabilizer formalism~\cite{nielsen2010quantum}.
	Universal quantum computing requires non-Clifford resources supplied by non-Clifford gates or magic state injection~\cite{gottesman1997stabilizercodesquantumerror,shor1996faulttolerant,gottesman1999demonstrating,bravyi2005universal,wills2025constantoverhead,wang2019quantifyingmagicchannels,wang2020efficientbounds}.
	Nonstabilizerness admits many inequivalent quantifiers~\cite{veitch2012negative,howard2017application}, and efficient protocols are available for pure-state measures such as the stabilizer R\'enyi entropies~\cite{leone2022stabilizer,leone2024stabilizer,haug2023stabilizer}.
	
	However, for a general $n$-qubit quantum state $\rho$, evaluating standard monotones such as the robustness of magic (RoM) requires tomographically complete data and optimization over the $2^{\Theta(n^2)}$ vertices of the stabilizer polytope, since stabilizer polytope $\STAB_n$ is the convex hull of all $n$-qubit stabilizer states ~\cite{Veitch2014,garcia2017geometry,heinrich2019robustness,hamaguchi2024handbook,oliveirajunior2025geometric,leone2026unbearablehardnessdecidingmagic}.
	A recent measurement-limited framework addresses the memory bottleneck by projecting $\STAB_n$ onto the coordinates of a smaller chosen set $\Mcal$ of $m$ Pauli operators.
	This projection yields a reduced stabilizer polytope $\STAB(\Mcal)$ and the associated state monotone $\rho\mapsto\RoM_\Mcal(\rho)$~\cite{varela2026predictingmagicmeasurements}.
	Any measured correlation vector outside $\STAB(\Mcal)$ certifies magic, and $\RoM_\Mcal(\rho)$ lower bounds the full $\RoM(\rho)$.
	The frustration graph $G_\Mcal$ has one node per measured Pauli operator and an edge between each anticommuting pair, so its independent sets are pairwise commuting subsets of $\Mcal$, i.e., commuting Pauli operators that can be measured in a single experimental setting~\cite{chapman2020characterization}.
	The vertices of $\STAB(\Mcal)$ are indexed by maximal independent sets of $G_\Mcal$ together with sign patterns compatible with the Pauli product relations~\cite{cabello2014graph,chapman2020characterization,xu2024bounding,varela2026predictingmagicmeasurements}.
	This description exposes two structural hierarchies: the frustration graph determines simultaneously measurable Pauli operators, whereas product relations determine sign dependencies for enumerating vertices of $\STAB(\Mcal)$.
	
	Constrained by the two coupled hierarchies, deciding membership in the reduced polytope $\STAB(\Mcal)$ remains NP-hard for general measurement sets~\cite{varela2026predictingmagicmeasurements}.
	To address this issue, we relax $\STAB(\Mcal)$ by allowing all sign assignments on each maximal independent set, thereby discarding only the restrictions imposed by Pauli product relations.
	We prove that this relaxation is exact if and only if $\Mcal$ has no active dependencies, where an active dependency is a minimal Pauli product relation contained entirely in an independent set, as depicted schematically in Fig.~\ref{fig:graph-polytope-rom}.
	Under this relaxation, the separation oracle for the dual constraints reduces to a standard maximum-weight independent set (MWIS) problem on $G_\Mcal$.
	Exploiting the polynomial-time solvability of MWIS on perfect graphs, the evaluation of $\RoM_\Mcal(\rho)$ is further reduced to an efficiently computable closed form in the absence of active dependencies.
	
	Beyond evaluating $\RoM_\Mcal(\rho)$, the closed form determines the witness capacity of a measurement set $\Mcal$, defined as $\sup_\rho\RoM_\Mcal(\rho)$.
	In the solvable regime, the witness capacity equals $\sqrt{\clnum(G_\Mcal)}$, where the clique number $\clnum(G_\Mcal)$ is the size of the largest pairwise anticommuting subset of $\Mcal$.
	The closed form therefore quantifies the trade-off between witness capacity and simultaneous measurability.
	
	A high witness capacity alone does not guarantee broad detection because a single reduced stabilizer polytope has limited state-space coverage.
	We therefore prove that Clifford unitary rotations preserve the solvability and witness capacity of the reduced stabilizer polytope, while covering different Pauli directions to enlarge the detected set.
	The numerical results certify nonstabilizerness of Haar-random states and hardware-efficient variational states using Clifford rotations of dedicated Pauli measurement sets.
	Graph structure is therefore not only a certificate of tractability but also a practical design principle for measurement sets.
	The same viewpoint extends beyond the solvable regime and connects the linear reduced witness to quadratic and entropic diagnostics.
	
	In summary, our contributions are fourfold: (1) We introduce the sign-relaxed reduced stabilizer polytope and establish the condition for its exactness in Theorem~\ref{thm:mwis}. (2) We identify its dual separation oracle as MWIS on $G_\Mcal$ in Proposition~\ref{prop:relaxation}. (3) We derive a closed form for reduced RoM and characterize its witness capacity in Theorem~\ref{thm:solvable}. (4) We use Clifford covariance in Proposition~\ref{prop:clifford} to numerically compare the detectability of different measurement sets.
	
	\emph{Background.}---
	\label{sec:background}
	Evaluating the robustness of magic over $2^n\prod_{j=1}^{n}(2^j+1)\approx2^{\Theta(n^2)}$ vertices of $\STAB_n$ requires tomographically complete data and a classically intractable optimization.
	Instead, one can fix a measurement set $\Mcal=\{P_1,\dots,P_m\}$ of nonidentity Pauli operators modulo phase, so that the experiment reports only the accessible expectation vector $\mathbm{b}_{\Mcal(\rho)}=(\tr(P_1\rho),\dots,\tr(P_m\rho))\in[-1,1]^m$.
	Seen through this window, stabilizer states are projected into the reduced stabilizer polytope
	{\small
		\begin{equation}
			\STAB(\Mcal)=\conv\{\mathbm{b}_{\Mcal(\sigma)}:\sigma\in\STAB_n\}\subset\R^m,
			\label{eq:reduced-stab-polytope}
		\end{equation}
	}
	and the reduced robustness of magic is the induced monotone
	{\small
		\begin{equation}
			\RoM_\Mcal(\rho)
			=\min_{\mathbm{x}}\!\Big\{\|\mathbm{x}\|_1:\!\textstyle\sum_v x_v\mathbm{v}=\mathbm{b}_{\Mcal(\rho)},\ \textstyle\sum_v x_v=1\Big\},
			\label{eq:reduced-primal}
		\end{equation}
	}
	where the sum runs over the vertices $\mathbm{v}$ of $\STAB(\Mcal)$. 
	The projection gives $\RoM_\Mcal(\rho)\le\RoM(\rho)$, hence $\RoM_\Mcal(\rho)>1$ certifies nonstabilizerness.

	The frustration graph $G_\Mcal=(\Mcal,E_\Mcal)$ has node set $\Mcal$ and an edge $\{P_i,P_j\}$ for anticommuting pairs $P_iP_j=-P_jP_i$~\cite{chapman2020characterization,xu2024bounding}.
	An independent set contains no adjacent nodes and therefore corresponds to a pairwise commuting context $S\subseteq\Mcal$.
	Write $\mathsf{Sign}(S):=\{f:S\to\{\pm1\}\}$ for the set of all sign assignments on $S$.
	A pure stabilizer state $\sigma$ is a joint eigenstate of $S$ with deterministic projected coordinates $f(P)=\tr(P\sigma)\in\{\pm1\}$ for every $P\in S$.
	These signs are constrained by Pauli product relations: if $\prod_{P\in T}P=\eta_T\mathbb{1}$ for a $T\subseteq S$, then $\prod_{P\in T}f(P)=\eta_T$.
	Accordingly, the sign assignments satisfying all such relations form the admissible sign set $\Bset_S\subseteq\mathsf{Sign}(S)$.
	For $f\in\mathsf{Sign}(S)$ or $f\in\Bset_S$, define $\mathbm{v}_{S,f}\in\R^m$ by $(\mathbm{v}_{S,f})_i=f(P_i)$ if $P_i\in S$ and $(\mathbm{v}_{S,f})_i=0$ otherwise.
	We write $\Ind_{\max}(G_\Mcal)$ for the maximal independent sets of $G_\Mcal$, equivalently the maximal commuting contexts.
	The reduced stabilizer polytope then has the exact V-representation~\cite{varela2026predictingmagicmeasurements}
	{\small
		\begin{equation}
			\STAB(\Mcal)=\conv\{\mathbm{v}_{S,f}:S\in\Ind_{\max}(G_\Mcal),\ f\in\Bset_S\}.
			\label{eq:exact-vrep-new}
		\end{equation}
	}
	Thus, the graph fixes the maximal commuting supports $S$, while $\Bset_S$ selects the sign patterns compatible with Pauli product relations.
	The two hierarchies meet in the linear programming dual of Eq.~\eqref{eq:reduced-primal}, whose constraints range over $S\in\Ind_{\max}(G_\Mcal)$ and $f\in\Bset_S$~\cite{hamaguchi2024handbook}
	{\small
		\begin{equation}
			\RoM_\Mcal(\rho)=\max_{\mathbm{y},\mu}\Big\{\mathbm{b}_{\Mcal(\rho)}^\top\mathbm{y}+\mu:\ \big|\mathbm{v}_{S,f}^\top\mathbm{y}+\mu\big|\le1,\ \forall S,f\Big\}.
			\label{eq:dual-exact}
		\end{equation}
	}
	Checking the dual constraints requires maximizing over $f\in\Bset_S$. Since $\Bset_S$ is a coset of a binary linear code, the two branches of the absolute value lead to linear optimizations over this affine code. Nevertheless, deciding membership in $\STAB(\Mcal)$ is NP-hard for a general Pauli measurement set~\cite{varela2026predictingmagicmeasurements}.

	\begin{figure}[t]
		\centering
		\definecolor{figactive}{HTML}{C65911}
		\definecolor{figinactive}{HTML}{6AAE94}
		\definecolor{figrom}{HTML}{00A6D6}
		\definecolor{figromrelaxed}{HTML}{215E9A}
		\begin{tikzpicture}
			\node[anchor=south west,inner sep=0] (polytopefig) at (0,0) {\includegraphics[width=0.92\columnwidth]{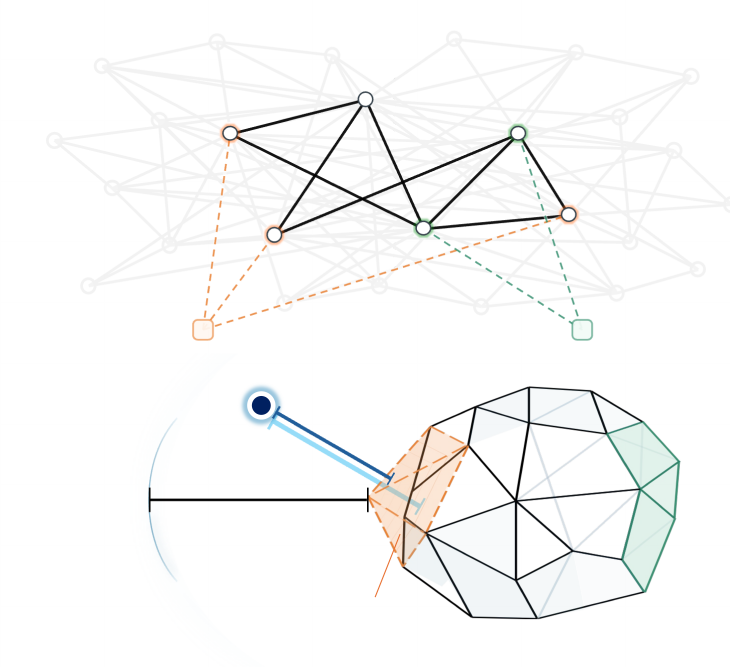}};
			\begin{scope}[x={(polytopefig.south east)},y={(polytopefig.north west)}]
				\node[font=\footnotesize] at (0.639,0.838) {$G_\Mcal$};
				\node[font=\footnotesize,text=figactive] at (0.205,0.530) {active};
				\node[font=\footnotesize,text=figinactive] at (0.913,0.520) {inactive};
				\node[font=\footnotesize] at (0.304,0.413) {$\rho$};
				\node[font=\scriptsize,text=figromrelaxed,rotate=-30.6] at (0.457,0.366) {$\widetilde{\RoM}_{\Mcal}$};
				\node[font=\scriptsize,text=figrom,rotate=-34.0] at (0.420,0.304) {$\RoM_{\Mcal}$};
				\node[font=\scriptsize] at (0.357,0.208) {$\sqrt{\clnum(G_\Mcal)}$};
				\node[font=\tiny,text=figactive] at (0.485,0.079) {$\widetilde{\STAB}(\Mcal)\setminus\STAB(\Mcal)$};
				\node[font=\footnotesize] at (0.892,0.037) {$\STAB(\Mcal)$};
			\end{scope}
		\end{tikzpicture}
		\caption{Schematic frustration graph, dependency checks, and exact versus sign-relaxed reduced stabilizer polytopes. An active dependency removes the orange sector of $\widetilde{\STAB}(\Mcal)$ to yield $\STAB(\Mcal)$, whereas an inactive dependency leaves the green boundary unchanged. The cyan and navy lines schematically denote $\RoM_\Mcal(\rho)$ and $\widetilde{\RoM}_\Mcal(\rho)$, respectively. If the frustration graph is a perfect graph, the reduced monotone is upper bounded by $\sqrt{\clnum(G_\Mcal)}$. }
		\label{fig:graph-polytope-rom}
	\end{figure}

	\emph{Results.}---
	\label{sec:results}
	Enforcing the admissible-sign constraints is generally hard.
	We therefore relax the sign dependencies by retaining all deterministic sign assignments for a sign-relaxed stabilizer polytope
	{\small
		\begin{equation}
			\widetilde{\STAB}(\Mcal)
			:=\conv\{\mathbm{v}_{S,f}:S\in\Ind_{\max}(G_\Mcal),\ f\in\mathsf{Sign}(S)\}.
			\label{eq:relaxed-polytope}
		\end{equation}
	}
	This construction is an outer relaxation satisfying $\STAB(\Mcal)\subseteq\widetilde{\STAB}(\Mcal)$.
	The associated relaxed monotone for a state is
	{\small
		\begin{equation}
			\widetilde{\RoM}_\Mcal(\rho)=\min_{\mathbm{x}}\!\Big\{\|\mathbm{x}\|_1:\!\textstyle\sum_{\tilde v}x_{\tilde v}\tilde{\mathbm{v}}=\mathbm{b}_{\Mcal(\rho)},\ \textstyle\sum_{\tilde v}x_{\tilde v}=1\Big\},
			\label{eq:relaxed-rom}
		\end{equation}
	}
	where the sums run over the vertices of $\widetilde{\STAB}(\Mcal)$.
	Consequently,
	\begin{equation}
		\widetilde{\RoM}_\Mcal(\rho)\le\RoM_\Mcal(\rho).
		\label{eq:relaxed-lower-bound}
	\end{equation}
	Detailed proofs of the following statements and the supporting graph theoretic results are given in Appendix~\ref{app:proof-mwis}.
	\begin{definition}[Active dependency]
		\label{def:active}
		A dependency is a nonempty inclusion-minimal subset $T\subseteq\Mcal$ satisfying $\prod_{P\in T}P\propto\mathbb{1}$ for some, equivalently any, ordering.
		It is active if its Pauli operators pairwise commute, i.e. $T\in\Ind(G_\Mcal)$.
	\end{definition}
	Reordering changes the product only by a sign, which is absorbed by $\propto$. For example, $\Mcal=\{X_1X_2,Z_1Z_2,Y_1Y_2\}$ is commuting and its three operators multiply to $-\mathbb{1}$. The exact signs therefore obey one parity check and realize four of the eight deterministic patterns retained by the relaxation, making the inclusion strict.
	\begin{lemma}
		\label{lem:collapse}
		$\Mcal$ has no active dependencies if and only if $\Bset_S=\mathsf{Sign}(S)$ for every commuting context $S\in\Ind(G_\Mcal)$.
	\end{lemma}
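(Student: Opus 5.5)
The plan is to pass to the symplectic picture. Write each $P\in\Mcal$ as $P=\omega_P\,X^{a_P}Z^{c_P}$ with $(a_P,c_P)\in\Ftwo^{2n}$. A subset $T$ has $\prod_{P\in T}P\propto\mathbb 1$ exactly when $\sum_{P\in T}(a_P,c_P)=0$. Thus a dependency, in the sense of Definition~\ref{def:active}, is a minimal linearly dependent subset of the vectors of $\Mcal$, i.e.\ a circuit of the associated binary matroid. For a commuting context $S\in\Ind(G_\Mcal)$, the set $\Bset_S$ is the solution set of the parity constraints $\prod_{P\in T}f(P)=\eta_T$, one for each $T\subseteq S$ with $\prod_{P\in T}P=\eta_T\mathbb 1$. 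The lemma then says that this system is vacuous for every context exactly when no circuit sits inside a context.

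\emph{($\Leftarrow$).} Assume there are no active dependencies and fix $S\in\Ind(G_\Mcal)$. Suppose some nonempty $T\subseteq S$ had $\prod_{P\in T}P\propto\mathbb 1$. Any nonempty dependent set contains an inclusion-minimal dependent subset $T'\subseteq T$. This $T'$ is pairwise commuting because it lies inside $S$, so it would be an active dependency, a contradiction. Hence the vectors of $S$ are linearly independent over $\Ftwo$, and the only relation is the empty one, which imposes nothing. We still have to show that every $f\in\mathsf{Sign}(S)$ is actually realized, in case $\Bset_S$ is understood as the set of realizable sign patterns. Since the elements of $S$ are commuting and independent Hermitian Paulis, they generate an abelian group not containing $-\mathbb 1$. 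We can therefore choose independent generators $\{f(P)P\}_{P\in S}$ and extend them to a full stabilizer group by standard symplectic completion. The resulting stabilizer state $\sigma$ has $\tr(P\sigma)=f(P)$ for all $P\in S$. So $\Bset_S=\mathsf{Sign}(S)$.

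\emph{($\Rightarrow$).} Contrapositively, let $T$ be an active dependency and take $S=T\in\Ind(G_\Mcal)$. Then $\prod_{P\in T}P=\eta_T\mathbb 1$, and since the elements pairwise commute and are Hermitian, $\eta_T\in\{\pm1\}$. Every $f\in\Bset_T$ must satisfy $\prod_{P\in T} f(P)=\eta_T$. Any $f$ with $\prod_{P\in T} f(P)=-\eta_T$ therefore lies in $\mathsf{Sign}(T)\setminus\Bset_T$; such an $f$ exists because flipping one sign reverses the product, and $T$ is nonempty. So $\Bset_T\neq\mathsf{Sign}(T)$. The example $\{X_1X_2,Z_1Z_2,Y_1Y_2\}$ illustrates this case.

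The only delicate step is the realizability claim in ($\Leftarrow$). There one must check that independence modulo phase plus commutativity lets every sign choice be imposed simultaneously. I would argue this through the fact that the group generated by $\{f(P)P\}$ is isomorphic to $\Ftwo^{|S|}$ with $-\mathbb 1$ excluded, because any product equal to $\pm\mathbb 1$ would give a nontrivial dependency. Everything else reduces to the minimal-subset observation.
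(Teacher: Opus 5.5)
Your proof is correct and is essentially the paper's argument: an active dependency $T$ is itself a commuting context whose parity check $\prod_{P\in T}f(P)=\eta_T$ excludes half of $\mathsf{Sign}(T)$, and conversely any nonempty relation inside a commuting context contains an inclusion-minimal one, which is an active dependency. Your stabilizer-completion step, showing every sign pattern is realized, is not needed under the paper's definition of $\Bset_S$ as the affine sign code, but it harmlessly covers the ``realized signs'' reading used in the V-representation. One cosmetic slip: your direction labels are swapped, since your ``($\Leftarrow$)'' part proves that no active dependencies implies $\Bset_S=\mathsf{Sign}(S)$, which is the forward implication.
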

	For an active dependency, the product has phase $\pm1$ and imposes a parity check on the admissible signs.
	Thus $\Mcal$ has no active dependencies when every dependency contains an anticommuting pair. A simple example is the single-qubit Pauli set $\Mcal = \{X, Y, Z\}$.
	\begin{theorem}
		
		\label{thm:mwis}
		$\STAB(\Mcal)=\widetilde{\STAB}(\Mcal)$ holds if and only if $\Mcal$ has no active dependencies.
		When these equivalent conditions hold, for every quantum state $\rho$,
		\begin{equation}
			\RoM_\Mcal(\rho)=\widetilde{\RoM}_\Mcal(\rho).
			\label{eq:no-active-tight}
		\end{equation}
		Consequently, the polytope relaxation computes the exact reduced robustness without a correction from sign dependencies.
	\end{theorem}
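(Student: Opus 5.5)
The plan is to reduce both directions of Theorem~\ref{thm:mwis} to Lemma~\ref{lem:collapse} and to compare the vertex sets in Eqs.~\eqref{eq:exact-vrep-new} and \eqref{eq:relaxed-polytope}.

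\emph{Sufficiency.} Assume $\Mcal$ has no active dependencies. Every $S\in\Ind_{\max}(G_\Mcal)$ is in particular a commuting context, so Lemma~\ref{lem:collapse} gives $\Bset_S=\mathsf{Sign}(S)$. The two generating sets therefore coincide, and so do their convex hulls, giving $\STAB(\Mcal)=\widetilde{\STAB}(\Mcal)$. The vertices of both polytopes are then the same points, so the linear programs Eq.~\eqref{eq:reduced-primal} and Eq.~\eqref{eq:relaxed-rom} have identical feasible sets and objectives. This yields Eq.~\eqref{eq:no-active-tight} for every $\rho$. I would note that the primal is well defined with the vertex set replaced by any generating set: adding non-extreme points does not change the optimum, since each such point can be rewritten as a convex combination of vertices without increasing the $\ell_1$ norm.

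\emph{Necessity.} Suppose $T$ is an active dependency with $\prod_{P\in T}P=\eta_T\mathbb{1}$. Extend $T$ to some $S\in\Ind_{\max}(G_\Mcal)$, and pick $f\in\mathsf{Sign}(S)$ with $\prod_{P\in T}f(P)=-\eta_T$; this is possible because $T$ is nonempty. I claim $\mathbm{v}_{S,f}\notin\STAB(\Mcal)$. To certify this, use the linear functional $\mathbm{y}=\sum_{P\in T}f(P)\mathbm{e}_P$, which satisfies $\mathbm{y}^\top\mathbm{v}_{S,f}=|T|$.

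For any exact vertex $\mathbm{v}_{S',g}$ with $g\in\Bset_{S'}$, we have $\mathbm{y}^\top\mathbm{v}_{S',g}=\sum_{P\in T\cap S'}f(P)g(P)$. This sum reaches $|T|$ only if $T\subseteq S'$ and $g|_T=f|_T$. But in that case $\prod_T g=\eta_T$ contradicts $\prod_T f=-\eta_T$. Hence every exact vertex gives a value at most $|T|-1$ (the sum changes parity in steps of $2$, so in fact at most $|T|-2$), and the maximum of $\mathbm{y}$ over $\STAB(\Mcal)$ is strictly below $|T|$. So $\mathbm{v}_{S,f}\in\widetilde{\STAB}(\Mcal)\setminus\STAB(\Mcal)$, and the inclusion is strict.

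This direction could also be phrased as: if the polytopes are equal, then each relaxed vertex $\mathbm{v}_{S,f}$, being a $\pm1/0$ point of maximal norm on its support, must be an exact vertex, so $\Bset_S=\mathsf{Sign}(S)$ on maximal contexts; this extends to all subcontexts, and Lemma~\ref{lem:collapse} then applies. I prefer the explicit separating functional because it avoids arguing about extremality.

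The main obstacle is ensuring the separation is genuinely strict. Points of $\STAB(\Mcal)$ supported on other maximal sets $S'$ that only partially overlap $T$ must not reach $|T|$. The case analysis above handles this, since partial overlap forces $|T\cap S'|<|T|$. The remaining subtlety is that $\Bset_{S'}$ is defined through all product relations within $S'$, so the restriction $g|_T$ automatically satisfies the parity check for $T$. That holds by the definition of $\Bset_{S'}$ given in the background.
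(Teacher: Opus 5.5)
Your proof is correct. Sufficiency is the same as in the paper: Lemma~\ref{lem:collapse} makes the two generating sets coincide, so the polytopes coincide and so do the two linear programs.

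For necessity you start as the paper does. You extend an active $T$ to a maximal $S$ and flip the parity of $f$ on $T$. You then certify $\mathbm{v}_{S,f}\notin\STAB(\Mcal)$ in a different way.

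\emph{The paper's argument.} It works in primal form. If $\mathbm{v}_{S,\bar f}=\sum_k\lambda_k\mathbm{v}_{S_k,f_k}$, the $\pm1$ coordinates on $S$ force every summand to agree with $\bar f$ on $S$, so $S\subseteq S_k$. Maximality of $S$ then gives $S_k=S$ and $f_k=\bar f$, which contradicts $\bar f\notin\Bset_S$.

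\emph{Your argument.} You use the functional $\mathbm{y}=\sum_{P\in T}f(P)\mathbm{e}_P$ and bound it over the exact generators. Because it only looks at the coordinates in $T$, the contradiction comes straight from the parity check on $T$. Maximality of $S$ is needed only to make $\mathbm{v}_{S,f}$ a generator of $\widetilde{\STAB}(\Mcal)$.

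\emph{Comparison.} Your route yields a single valid inequality $\mathbm{y}^\top\mathbm{v}\le|T|-1$ for $\STAB(\Mcal)$ that depends only on $T$ and $f|_T$. It simultaneously cuts off every relaxed vertex $\mathbm{v}_{S'',f''}$ with $S''\supseteq T$ and $f''|_T=f|_T$. That is a concrete description of what the sign dependency removes, which the paper's vertex-by-vertex argument does not give. The paper's version is slightly shorter. The two are essentially primal and dual forms of the same observation: a $\pm1$ coordinate can only be reproduced by generators that agree with it.

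\emph{A minor point.} Your parenthetical claim that every exact generator gives at most $|T|-2$ is true, but ``parity changes in steps of $2$'' only justifies it when $T\subseteq S'$. If a maximal $S'$ missed exactly one $P_0\in T$, the sum would have the parity of $|T|-1$. Ruling this out needs one more observation: $P_0\propto\prod_{P\in T\setminus\{P_0\}}P$ commutes with every element of $S'$, so maximality forces $P_0\in S'$. Your argument only uses the bound $|T|-1$, so this does not affect correctness.
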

	The sign relaxation exposes a graph-only constraint in the dual of reduced RoM. To show this, we recall the associated weighted optimization:
	For nonnegative node weights $\mathbm{w}\in\mathbb{R}_{\ge0}^{\Mcal}$, the maximum-weight independent set (MWIS) value is defined as~\cite{karp1972reducibility}
	\begin{equation}
		\alpha_{\mathbm{w}}(G_\Mcal):=\max_{I\in\Ind(G_\Mcal)}\sum_{P_i\in I}w_i .
		\label{eq:wmis}
	\end{equation}
	Here $\Ind(G_\Mcal)$ denotes the set of all independent sets of $G_\Mcal$. 
	Recent research shows that energy witnesses reduce a single Hamiltonian support function to MWIS under restrictive independence assumptions~\cite{macedo2026heat}. We extend this perspective to the reduced stabilizer polytope and identify the solvable regime below.
	\begin{proposition}
		
		\label{prop:relaxation}
		For any Pauli measurement set $\Mcal=\{P_1,\dots,P_m\}$ and a quantum state $\rho$,
		{\small
			\begin{equation}
				\widetilde{\RoM}_\Mcal(\rho)=\max_{\mathbm{y},\mu}\Big\{\mathbm{b}_{\Mcal(\rho)}^\top \mathbm{y}+\mu: \alpha_{|\mathbm{y}|}(G_\Mcal)+|\mu|\le 1\Big\},
				\label{eq:dual-relaxed}
			\end{equation}
		}
		where $|\mathbm{y}|:=(|y_1|,\dots,|y_m|)^\top$. Testing a candidate dual point requires evaluating the MWIS value $\alpha_{|\mathbm{y}|}(G_\Mcal)$ defined in Eq.~\eqref{eq:wmis}. 
	\end{proposition}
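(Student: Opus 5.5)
The plan is to write the LP dual of the relaxed primal in Eq.~\eqref{eq:relaxed-rom} in the same form as Eq.~\eqref{eq:dual-exact}, and then collapse the family of vertex constraints into one graph quantity. Applying standard LP duality to the $\ell_1$-minimization (split $\mathbm{x}=\mathbm{x}^+-\mathbm{x}^-$ with both parts nonnegative) gives
\begin{equation*}
\widetilde{\RoM}_\Mcal(\rho)=\max_{\mathbm{y},\mu}\Big\{\mathbm{b}_{\Mcal(\rho)}^\top\mathbm{y}+\mu:\ \big|\mathbm{v}_{S,f}^\top\mathbm{y}+\mu\big|\le1,\ \forall S\in\Ind_{\max}(G_\Mcal),\ f\in\mathsf{Sign}(S)\Big\}.
\end{equation*}
Strong duality applies because the primal is feasible. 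Any point of $\R^m$ lies in the affine hull of the vertices, which contains $\mathbm{0}$ (the average of $\mathbm{v}_{S,f}$ and $\mathbm{v}_{S,-f}$) and every unit vector $\pm\mathbm{e}_i$ (each $P_i$ lies in some maximal independent set). The optimum is also bounded, since $\|\mathbm{x}\|_1\ge1$.

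The key step is to show that, for fixed $(\mathbm{y},\mu)$, the whole constraint family is equivalent to the single inequality $\alpha_{|\mathbm{y}|}(G_\Mcal)+|\mu|\le1$. First, for fixed $S$ the signs $f$ range freely over $\mathsf{Sign}(S)$, so
\begin{equation*}
\max_f \big(\mathbm{v}_{S,f}^\top\mathbm{y}\big)=\sum_{P_i\in S}|y_i|,\qquad \min_f \big(\mathbm{v}_{S,f}^\top\mathbm{y}\big)=-\sum_{P_i\in S}|y_i|.
\end{equation*}
Hence the two branches of the absolute value become $\sum_{P_i\in S}|y_i|+\mu\le1$ and $\sum_{P_i\in S}|y_i|-\mu\le1$, which together read $\sum_{P_i\in S}|y_i|+|\mu|\le1$. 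Second, I maximize over $S\in\Ind_{\max}(G_\Mcal)$. Because the weights $|y_i|$ are nonnegative, enlarging an independent set never decreases its weight. So the maximum over maximal independent sets equals the maximum over all of $\Ind(G_\Mcal)$, which is $\alpha_{|\mathbm{y}|}(G_\Mcal)$ as defined in Eq.~\eqref{eq:wmis}.

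Substituting this equivalence into the dual gives Eq.~\eqref{eq:dual-relaxed}. The remark about testing a candidate point then follows directly: the feasibility check is exactly the evaluation of $\alpha_{|\mathbm{y}|}(G_\Mcal)$ against $1-|\mu|$.

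I expect the main obstacle to be justifying the duality step cleanly: feasibility and boundedness of the primal, and correct handling of the normalization multiplier $\mu$. This is most delicate if the vertex list includes duplicates or nonextreme points $\mathbm{v}_{S,f}$. To deal with that, I would observe that adding nonextreme points to the primal does not change its value, since each such point is a convex combination of vertices and the triangle inequality keeps $\|\mathbm{x}\|_1$ from increasing. So one may index the primal by all pairs $(S,f)$, which matches the dual constraints above.

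The sign-decoupling step is elementary, and it relies on exactly the freedom over signs that Lemma~\ref{lem:collapse} and Theorem~\ref{thm:mwis} show may fail in the exact polytope when active dependencies are present.
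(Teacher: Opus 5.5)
Your proposal is correct and follows essentially the same route as the paper: strong LP duality for the relaxed primal, justified by feasibility (the relaxed vertices affinely span $\R^m$) and boundedness ($\|\mathbm{x}\|_1\ge1$). It then decouples the free signs on each maximal independent set to get $\sum_{P_i\in S}|y_i|+|\mu|\le1$, and replaces the maximum over maximal independent sets by $\alpha_{|\mathbm{y}|}(G_\Mcal)$ using nonnegativity of the weights. Your explicit affine-span construction and your remark that duplicate or nonextreme points $\mathbm{v}_{S,f}$ leave the primal value unchanged are small details that the paper leaves implicit.
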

	Within $G_\Mcal$, a clique is a subset $Q\subseteq\Mcal$ whose nodes are pairwise adjacent, equivalently a pairwise anticommuting set of Pauli operators. Its maximum cardinality is the clique number $\clnum(G_\Mcal)$.
	A proper coloring assigns colors to nodes such that adjacent nodes have different colors, and the chromatic number $\chi(G_\Mcal)$ is the minimum number of colors required. For a frustration graph, each color class is a commuting context, so $\chi(G_\Mcal)$ is the fewest commuting contexts that partition $\Mcal$.
	The nodes of a clique require distinct colors, and hence $\clnum(G_\Mcal)\le\chi(G_\Mcal)$. A graph $G$ is perfect if $\chi(G')=\clnum(G')$ for every induced subgraph $G'\subseteq G$~\cite{chudnovsky2006strong}.
	Perfect graphs admit polynomial-time MWIS optimization, and in this regime the graph-only dual further yields the closed form below~\cite{lovasz1972normal,lovasz1972characterization,lovasz1979shannon,grotschel1981ellipsoid,grotschel1988geometric,chvatal1975certain}.
	
	\begin{theorem}
		\label{thm:solvable}
		If $\Mcal$ has no active dependencies and $G_\Mcal$ is perfect, then for any $n$-qubit quantum state $\rho$,
		{\small
			\begin{equation}\label{eq:closed-form}
				\RoM_\Mcal\big(\rho\big)
				=
				\max\!\Big(1,\ \max_{Q} \sum_{P\in Q}\big|\tr(P\rho)\big| \Big)
				\le\sqrt{\clnum(G_\Mcal)},
			\end{equation}
		}
		where $Q$ ranges over cliques of $G_\Mcal$, equivalently over pairwise anticommuting subsets of $\Mcal$.
		The bound is attained by any state supported on the $+1$ eigenspace of $|Q|^{-1/2}\sum_{P\in Q}P$ for the maximum clique $Q$.
	\end{theorem}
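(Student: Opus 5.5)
By Theorem~\ref{thm:mwis}, the absence of active dependencies gives $\RoM_\Mcal(\rho)=\widetilde{\RoM}_\Mcal(\rho)$. Proposition~\ref{prop:relaxation} then expresses this as the graph-only dual \eqref{eq:dual-relaxed}. The plan is to evaluate that dual in closed form using perfection of $G_\Mcal$, and then to bound the resulting clique sums with an operator-norm argument.

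\emph{Step 1: describe the feasible set.} For a perfect graph, the stable set polytope coincides with the fractional stable set polytope given by the clique inequalities (Chv\'atal, Lov\'asz). Linear programming duality then gives, for nonnegative weights $\mathbm w$, the identity $\alpha_{\mathbm w}(G_\Mcal)=\min\{\sum_Q\lambda_Q:\lambda\ge0,\ \sum_{Q\ni P_i}\lambda_Q\ge w_i\}$, where $Q$ ranges over cliques. This expresses the MWIS value as a fractional clique cover.

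\emph{Step 2: upper bound.} Take a feasible $(\mathbm y,\mu)$ and write $b_i=\tr(P_i\rho)$ and $t=\alpha_{|\mathbm y|}\le 1-|\mu|$. Choose an optimal fractional clique cover $\lambda$ for the weights $|\mathbm y|$. Then
\[
\mathbm b^\top\mathbm y\le\sum_i|b_i||y_i|\le\sum_Q\lambda_Q\sum_{P\in Q}|\tr(P\rho)|\le t\,M,
\]
with $M:=\max_Q\sum_{P\in Q}|\tr(P\rho)|$. Hence $\mathbm b^\top\mathbm y+\mu\le tM+(1-t)\le\max(1,M)$.

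\emph{Step 3: lower bound.} The point $\mathbm y=0,\mu=1$ gives the value $1$. For a clique $Q$ attaining $M$, set $y_i=\sgn(b_i)$ for $P_i\in Q$, $y_i=0$ otherwise, and $\mu=0$. Any independent set meets a clique in at most one node, so $\alpha_{|\mathbm y|}=1$ and this point is feasible, with value $M$. Together with Step 2 this proves the equality in \eqref{eq:closed-form}.

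\emph{Step 4: the bound $\sqrt{\clnum(G_\Mcal)}$ and its attainment.} This is the part that needs care. Fix a clique $Q$ and signs $s_P=\sgn\tr(P\rho)$. The operator $A=\sum_{P\in Q}s_PP$ satisfies $A^2=|Q|\mathbb 1$, because the cross terms cancel pairwise by anticommutation. Hence $\|A\|_\infty=\sqrt{|Q|}$, and
\[
\sum_{P\in Q}|\tr(P\rho)|=\tr(A\rho)\le\sqrt{|Q|}\le\sqrt{\clnum(G_\Mcal)}.
\]
When $|Q|\ge2$, $A$ is traceless, so its eigenvalues are exactly $\pm\sqrt{|Q|}$ and the $+1$ eigenspace of $|Q|^{-1/2}\sum_{P\in Q}P$ is nonempty. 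For a state $\rho$ supported on that eigenspace, $\tr(\sum_{P\in Q}P\,\rho)=\sqrt{|Q|}$. Since each $|\tr(P\rho)|\ge\tr(P\rho)$, this gives $M\ge\sqrt{|Q|}$, and combined with the upper bound $M=\sqrt{\clnum(G_\Mcal)}$ for a maximum clique.

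The main obstacle is Step 1, namely the correct use of polyhedral duality for perfect graphs. Steps 2 and 3 reduce the maximization over all $(\mathbm y,\mu)$ to a maximization over cliques only because the fractional clique cover bound is tight, and that tightness is exactly what perfection provides.
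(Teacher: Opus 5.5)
Your proposal is correct and follows the paper's proof. You pass to the graph-only dual via Theorem~\ref{thm:mwis} and Proposition~\ref{prop:relaxation}, and use perfect-graph fractional clique-cover duality for the upper bound $\max(1,M)$; the paper phrases this step as the scaled antiblocker identity. You then use the same dual points $(\mathbm{0},1)$ and $(\mathbm{y}^{(Q)},0)$ for the lower bound, and anticommutation for the $\sqrt{\clnum(G_\Mcal)}$ ceiling and its saturation.

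The only local variations are these. You bound $\sum_{P\in Q}|\tr(P\rho)|$ in one step by $\|\sum_{P\in Q}s_PP\|_\infty=\sqrt{|Q|}$, where the paper uses its variance bound $\|\mathbm{b}_Q\|_2\le1$ followed by Cauchy--Schwarz. You certify saturation from the summed expectation, where the paper uses the per-operator identity $\tr(P\rho_Q)=q^{-1/2}$. Both are valid, provided you take $s_P\in\{\pm1\}$ even when $\tr(P\rho)=0$.
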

	For finite precision expectation values, evaluating Eq.~\eqref{eq:closed-form} is a maximum-weight clique problem with weights $|\tr(P\rho)|$, equivalently MWIS on complement graph $\overline{G_\Mcal}$. The complement of a perfect graph is perfect, so Lemma~\ref{lem:perfect-collapse} in Appendix~\ref{app:proof-mwis} gives a polynomial-time evaluation.
	A proper coloring of $G_\Mcal$ partitions $\Mcal$ into mutually commuting Pauli families, which can be measured jointly after Clifford diagonalization. 
	Since a perfect frustration graph satisfies $\chi(G_\Mcal)=\clnum(G_\Mcal)$, all expectation values required by Eq.~\eqref{eq:closed-form} can be collected using at most $\clnum(G_\Mcal)$ commuting settings, instead of original $m$ different settings. 
	Moreover, $\clnum(G_\Mcal)\le2n+1$ for $n$-qubit Pauli operators, giving the universal ceiling $\RoM_\Mcal(\rho)\le\sqrt{2n+1}$ throughout the solvable regime~\cite{sarkar2021sets}. 
	
	The ceiling is reached when the frustration graph realizes the maximum anticommuting clique permitted.
	Fermionic system provides an example for constructing such measurement sets.  Note that the $n$-mode creation and annihilation operators $a_k^\dagger$ and $a_k$ define $2n$ Hermitian Majorana generators $c_{2k-1}=a_k+a_k^\dagger$ and $c_{2k}=i(a_k^\dagger-a_k)$, satisfying $\{c_i,c_j\}=2\delta_{ij} \mathbb{1}$.
	Under the Jordan--Wigner transformation, they become the desired Pauli strings $\gamma_{2k-1}=Z_1\cdots Z_{k-1}X_k$ and $\gamma_{2k}=Z_1\cdots Z_{k-1}Y_k$.
	In addtion, the global parity operator $\gamma_{2n+1}=Z_1\cdots Z_n=(-i)^n\gamma_1\cdots\gamma_{2n}$ anticommutes with all of them, completing the Jordan--Wigner(JW) set $\Gamma=\{\gamma_i\}_{i=1}^{2n+1}$ of pairwise anticommuting Pauli operators~\cite{brauer1935spinors,ipek2026phasespace}.
	Then $G_\Gamma=K_{2n+1}$, the complete graph $K_{2n+1}$ on $2n+1$ nodes is perfect and satisfies $\clnum(G_\Gamma) = 2n+1$. 
	Its unique nontrivial dependency $\prod_i\gamma_i\propto\mathbb{1}$ includes anticommuting pairs, making sure that the JW set has no active dependencies. 
	The reduced stabilizer polytope of $K_{2n+1}$ is a cross-polytope, while anticommuting uncertainty bounds quantum expectation vectors by the Euclidean ball~\cite{cabello2014graph,xu2024bounding,xu2025simultaneous}.
	
	Majorana generators also construct the tree-structured measurement set $\Mcal_{\rm tree}$ defined in Table~\ref{tab:measurement-sets}. This dependency-free construction has a chordal, hence perfect, frustration graph and provides another solvable family whose ceiling is $\sqrt{\lfloor3n/2\rfloor}$. 	
	Figure~\ref{fig:measurement-graphs} and Table~\ref{tab:measurement-sets} summarize four solvable measurement sets, their frustration graphs, and their witness capacities.

	\begin{proposition}[Clifford covariance]
		\label{prop:clifford}
		Let $\Mcal=\{P_i\}_{i=1}^m$ be any Pauli measurement set and let $C$ be a Clifford unitary.
		With $C\Mcal C^\dagger=\{CP_iC^\dagger\}_{i=1}^m$ ordered by $P_i\mapsto CP_iC^\dagger$, one has
		\begin{equation}
			\RoM_{C\Mcal C^\dagger}(\rho)=\RoM_\Mcal(C^\dagger\rho C).
			\label{eq:clifford}
		\end{equation}
		If $\Mcal$ has no active dependencies and $G_\Mcal$ is perfect, then $C\Mcal C^\dagger$ has the same property and the closed form becomes
		{\small
			\begin{equation}
				\RoM_{C\Mcal C^\dagger}(\rho)
				=
				\max\!\Big(1,\ \max_Q\sum_{P_i\in Q}\big|\tr(CP_iC^\dagger\rho)\big|\Big),
				\label{eq:clifford-closed-form}
			\end{equation}
		}
		where $Q$ ranges over cliques of $G_\Mcal$, and $CQC^\dagger$ ranges over cliques of $G_{C\Mcal C^\dagger}$.
	\end{proposition}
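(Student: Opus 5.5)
The plan is to show that conjugation by $C$ induces a relabelling of the whole combinatorial structure: the frustration graph, the dependencies, the admissible sign sets and the reduced polytope. Throughout, $C\Mcal C^\dagger$ is indexed by $i\mapsto CP_iC^\dagger$.

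\emph{Equation~\eqref{eq:clifford}.} First, $\tr(CP_iC^\dagger\rho)=\tr(P_i\,C^\dagger\rho C)$. Hence
\[
\mathbm{b}_{C\Mcal C^\dagger(\rho)}=\mathbm{b}_{\Mcal(C^\dagger\rho C)} .
\]
Second, the map $\sigma\mapsto C^\dagger\sigma C$ is a bijection of $\STAB_n$, since Clifford unitaries permute stabilizer states. Applying the same identity to $\sigma$ gives
\[
\{\mathbm{b}_{C\Mcal C^\dagger(\sigma)}:\sigma\in\STAB_n\}=\{\mathbm{b}_{\Mcal(\sigma')}:\sigma'\in\STAB_n\}.
\]
Therefore $\STAB(C\Mcal C^\dagger)=\STAB(\Mcal)$ as subsets of $\R^m$ under the fixed index order, and the two polytopes have the same vertices. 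The primal programs~\eqref{eq:reduced-primal} for $(C\Mcal C^\dagger,\rho)$ and for $(\Mcal,C^\dagger\rho C)$ then coincide term by term, which proves~\eqref{eq:clifford}.

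\emph{Transfer of the hypotheses.} Conjugation is a group automorphism, so
\[
CP_iC^\dagger\,CP_jC^\dagger=\pm\,CP_jC^\dagger\,CP_iC^\dagger \iff P_iP_j=\pm P_jP_i .
\]
Thus $i\mapsto i$ is a graph isomorphism $G_\Mcal\cong G_{C\Mcal C^\dagger}$. Perfectness is preserved, and cliques $Q$ correspond to cliques $CQC^\dagger$. For products, $\prod_{P\in T}CPC^\dagger=C\big(\prod_{P\in T}P\big)C^\dagger$, which is proportional to $\mathbb{1}$ exactly when $\prod_{P\in T}P\propto\mathbb{1}$. Inclusion-minimality depends only on this property, so dependencies correspond bijectively. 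Active dependencies, being those that are independent sets, also correspond because the graphs are isomorphic. Hence $C\Mcal C^\dagger$ has no active dependencies and a perfect frustration graph.

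\emph{Closed form.} Apply Theorem~\ref{thm:solvable} to $C\Mcal C^\dagger$ with state $\rho$, maximising over cliques of $G_{C\Mcal C^\dagger}$. Rewrite each clique as $CQC^\dagger$ with $Q$ a clique of $G_\Mcal$; the terms are then $|\tr(CP_iC^\dagger\rho)|$, which is~\eqref{eq:clifford-closed-form}. Equivalently, one can combine~\eqref{eq:clifford} with Theorem~\ref{thm:solvable} for $\Mcal$ at the state $C^\dagger\rho C$; the two routes agree.

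No step is a real obstacle. The one point needing care is the first step: invariance of the projected polytope requires that Clifford conjugation maps $\STAB_n$ onto itself, and that it does not merely preserve the exact V-representation~\eqref{eq:exact-vrep-new} up to sign changes. If the image Pauli carries a phase, i.e.\ $CP_iC^\dagger=-P'$, this sign convention has to be fixed. I would fix it by taking $CP_iC^\dagger$ literally, signs included, as the measured operator. The argument via $\sigma\mapsto C^\dagger\sigma C$ then avoids the issue, because it never refers to the V-representation.
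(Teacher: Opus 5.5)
Your proposal is correct and follows essentially the same route as the paper's proof sketch. Both arguments rest on the same points:
\begin{itemize}
\item the coordinate identity $\mathbm{b}_{(C\Mcal C^\dagger)(\rho)}=\mathbm{b}_{\Mcal(C^\dagger\rho C)}$, together with the Clifford bijection of $\STAB_n$, makes the two linear programs identical;
\item conjugation preserves commutation, anticommutation and Pauli product relations, and hence the frustration graph and the active status of every dependency;
\item a possible sign in $CP_iC^\dagger$ is handled, as in the paper, by retaining the conjugated observable itself.
\end{itemize}
Your version simply spells out more explicitly that $\STAB(C\Mcal C^\dagger)=\STAB(\Mcal)$ as subsets of $\R^m$. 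It also notes that the closed form follows from Theorem~\ref{thm:solvable} by either of two equivalent routes.
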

	
	\emph{Proof sketch.} Clifford conjugation maps stabilizer states bijectively to stabilizer states and gives $\mathbm{b}_{(C\Mcal C^\dagger)(\rho)}=\mathbm{b}_{\Mcal(C^\dagger\rho C)}$.
	The two linear programs are therefore identical after the coordinate identification.
	Although $CP_iC^\dagger$ may differ by a sign from a chosen phase-free representative, the conjugated observable itself is retained, so this coordinate identity is unambiguous.
	Clifford conjugation also preserves commuting, anticommuting, and Pauli product relations, hence both $G_\Mcal$ and the active status of every dependency.
	Thus, for a solvable seed measurement set, any ensemble witness $\max_\ell\RoM_{C_\ell\Mcal C_\ell^\dagger}(\rho)$ obeys the same witness-capacity bound $\sqrt{\clnum(G_\Mcal)}$.
	
	\newcommand{\mainCliffordNumerics}{%
		\begin{figure}[!t]
			\centering
			\includegraphics[width=\columnwidth]{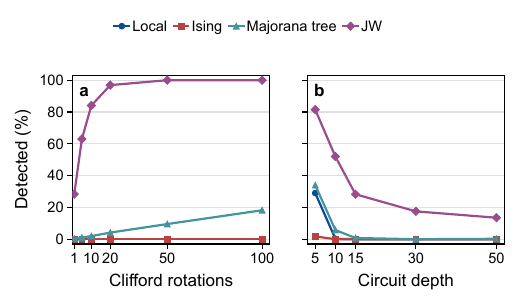}
			\caption{
				$8$-qubit magic-detection rates for the Local, Ising, Majorana-tree, and JW measurement sets. (\textbf{a}) Haar-random pure states versus the cumulative number of global Clifford rotations. (\textbf{b}) hardware-efficient variational states versus circuit depth.
			}
			\label{fig:detection-numerics}
		\end{figure}
	}
	\mainCliffordNumerics
	
	\newcommand{\mainMeasurementDefinitions}{%
		\begin{figure*}[!t]
			\centering
			\begin{minipage}[t]{0.24\textwidth}
				\centering
				\textbf{(a) Local}\\[2pt]
				\begin{tikzpicture}[
					scale=0.74,
					pnode/.style={circle,draw,inner sep=0.8pt,minimum size=4.7mm,font=\scriptsize},
					fedge/.style={draw=gray!70,line width=0.45pt}
					]
					\foreach \q/\x in {1/0,2/1.65}{
						\node[pnode] (lx\q) at (\x,0.72) {$X_{\q}$};
						\node[pnode] (ly\q) at (\x-0.43,0) {$Y_{\q}$};
						\node[pnode] (lz\q) at (\x+0.43,0) {$Z_{\q}$};
						\draw[fedge] (lx\q) -- (ly\q) -- (lz\q) -- (lx\q);
					}
					\node at (2.55,0.32) {$\cdots$};
				\end{tikzpicture}\\[-2pt]
				{\scriptsize $G_{\Mcal_{\rm loc}}=\bigsqcup_i K_3$}
			\end{minipage}
			\hfill
			\begin{minipage}[t]{0.24\textwidth}
				\centering
				\textbf{(b) Ising path}\\[2pt]
				\begin{tikzpicture}[
					scale=0.72,
					pnode/.style={circle,draw,inner sep=0.8pt,minimum size=4.7mm,font=\scriptsize},
					znode/.style={circle,draw,inner sep=0.45pt,minimum size=5.8mm,font=\scriptsize},
					fedge/.style={draw=gray!70,line width=0.45pt}
					]
					\foreach \i/\x in {1/0,2/1.15,3/2.30,4/3.45}{
						\node[pnode] (ix\i) at (\x,1.15) {$X_{\i}$};
					}
					\foreach \i/\j/\x/\lab in {1/2/0.575/Z_{12},2/3/1.725/Z_{23},3/4/2.875/Z_{34}}{
						\node[znode] (izz\i) at (\x,0) {$\lab$};
						\draw[fedge] (izz\i) -- (ix\i);
						\draw[fedge] (izz\i) -- (ix\j);
					}
					\node at (4.30,0.55) {$\cdots$};
				\end{tikzpicture}\\[-2pt]
				{\scriptsize incidence graph of a path}
			\end{minipage}
			\hfill
			\begin{minipage}[t]{0.24\textwidth}
				\centering
				\textbf{(c) Majorana tree}\\[2pt]
				\begin{tikzpicture}[
					scale=0.73,
					pnode/.style={circle,draw,inner sep=0.65pt,minimum size=4.8mm,font=\scriptsize},
					fedge/.style={draw=gray!64,line width=0.38pt}
					]
					\foreach \i/\ang in {1/90,2/30,3/-30,4/-90,5/-150,6/150}{
						\node[pnode] (a\i) at (\ang:1.20) {$A_{\i}$};
					}
					\node[pnode] (b1) at (90:1.95) {$B_1$};
					\foreach \i/\j in {1/2,1/3,1/4,1/5,1/6,2/3,2/4,2/5,2/6,3/4,3/5,3/6,4/5,4/6,5/6}{
						\draw[fedge] (a\i) -- (a\j);
					}
					\draw[fedge] (a1) -- (b1);
				\end{tikzpicture}\\[-2pt]
				{\scriptsize $K_6$ with one pendant node ($n=4$)}
			\end{minipage}
			\hfill
			\begin{minipage}[t]{0.24\textwidth}
				\centering
				\textbf{(d) Jordan--Wigner set}\\[2pt]
				\begin{tikzpicture}[
					scale=0.77,
					pnode/.style={circle,draw,inner sep=0.8pt,minimum size=4.7mm,font=\scriptsize},
					fedge/.style={draw=gray!64,line width=0.34pt}
					]
					\foreach \i/\ang in {1/90,2/50,3/10,4/-30,5/-70,6/-110,7/-150,8/170,9/130}{
						\node[pnode] (g\i) at (\ang:1.25) {$\gamma_{\i}$};
					}
					\foreach \i/\j in {1/2,1/3,1/4,1/5,1/6,1/7,1/8,1/9,2/3,2/4,2/5,2/6,2/7,2/8,2/9,3/4,3/5,3/6,3/7,3/8,3/9,4/5,4/6,4/7,4/8,4/9,5/6,5/7,5/8,5/9,6/7,6/8,6/9,7/8,7/9,8/9}{
						\draw[fedge] (g\i) -- (g\j);
					}
				\end{tikzpicture}\\[-2pt]
				{\scriptsize complete graph $K_9$}
			\end{minipage}
			\caption{
				Representative frustration graph structures for four solvable Pauli measurement sets.
				\textbf{(a)}. Local XYZ family for arbitrary $n$, with the ellipsis denoting further disjoint $K_3$ blocks.
				\textbf{(b)}. Open Ising path for arbitrary $n$, with the ellipsis denoting continuation of the path.
				\textbf{(c)}. Four-qubit Majorana tree, whose frustration graph is $K_6$ with one pendant node.
				\textbf{(d)}. Four-qubit Jordan--Wigner set $\Gamma_4=\{\gamma_1,\ldots,\gamma_9\}$ realizing $K_9$. In general, $\Gamma_n$ realizes $K_{2n+1}$.
			}
			\label{fig:measurement-graphs}
		\end{figure*}
		
		\begin{table*}[!t]
			\caption{
				Pauli definitions and witness capacities for the four measurement sets in Fig.~\ref{fig:measurement-graphs}.
				The witness capacity is $\sup_\rho\RoM_\Mcal(\rho)=\sqrt{\clnum(G_\Mcal)}$.
				The first $2n$ operators in $\Gamma_n$ are the standard Jordan--Wigner Majoranas, and $\gamma_{2n+1}$ completes the JW set~\cite{zurel2025simulation}.
			}
			\label{tab:measurement-sets}
			\small
			\setlength{\tabcolsep}{4pt}
			\begin{tabular}{@{}llc@{}}
				\toprule
				Measurement set & \multicolumn{1}{c}{Pauli operators} & Witness capacity \\
				\midrule
				Local XYZ ($n$ qubits) & \parbox[t]{0.56\textwidth}{$\Mcal_{\rm loc}=\{X_i,Y_i,Z_i:\ i=1,\ldots,n\}$.} & $\sqrt3$ \\[2pt]
				Ising path ($n\ge2$ qubits) & \parbox[t]{0.56\textwidth}{$\Mcal_{\rm Ising}=\{X_i:\ i=1,\ldots,n\}\cup\{Z_iZ_{i+1}:\ i=1,\ldots,n-1\}$.} & $\sqrt2$ \\[2pt]
				Majorana tree ($n\ge2$ qubits) & \parbox[t]{0.56\textwidth}{$\Mcal_{\rm tree}=\{A_a=i\gamma_1\gamma_{a+1}\}_{a=1}^{d_n}\cup\{B_a=i\gamma_{a+1}\gamma_{d_n+a+1}\}_{a=1}^{r_n}$, where $d_n=\lfloor3n/2\rfloor$ and $r_n=\lceil n/2\rceil-1$.} & $\sqrt{\lfloor3n/2\rfloor}$ \\[2pt]
				JW set ($n$ qubits) & \parbox[t]{0.56\textwidth}{$\gamma_{2k-1}=Z_1\cdots Z_{k-1}X_k$ and $\gamma_{2k}=Z_1\cdots Z_{k-1}Y_k$ for $k=1,\ldots,n$, together with $\gamma_{2n+1}=Z_1\cdots Z_n$.} & $\sqrt{2n+1}$ \\
				\bottomrule
			\end{tabular}
		\end{table*}
	}

	\mainMeasurementDefinitions
	Adding inequivalent Clifford rotations can enlarge the detected set without increasing this bound, while Clifford symmetries of a measurement set identify redundant rotations~\cite{nation2026clifford}. Figure~\ref{fig:detection-numerics} compares the empirical detection rates of four measurement sets at $n=8$.
	The detection rate is the fraction of states for which the largest reduced robustness over the evaluated measurement-set copies exceeds one.
	Fig.~\ref{fig:detection-numerics} \textbf{a} tests a common batch of $10{,}000$ Haar-random states by $K\in\{1,5,10,20,50,100\}$ random Clifford rotations.
	For the variational-state scan in Fig.~\ref{fig:detection-numerics} \textbf{b}, a common batch of $400$ states is evaluated at each circuit depth $L\in\{5,10,15,30,50\}$, and each circuit block consists of independent single-qubit $R_Y$ and $R_Z$ rotations followed by a periodic ring of controlled-$Z$ gates.
	The numerical result confirms the coverage gain from Clifford rotations, most clearly for the JW and Majorana-tree sets, while both panels show the broader empirical trend that larger witness capacity is associated with higher detection rates.
	
	The full Pauli measurement set exhibits both obstructions. Its frustration graph is the symplectic graph together with the isolated identity node. For $n\ge2$ it is imperfect, and its chromatic number $2^n+1$ grows exponentially, whereas its clique number is only $2n+1$~\cite{brouwer2012spectra}. After fixing the identity sign, each maximal commuting context also carries $2^n-1-n$ independent constraints imposed by sign dependencies~\cite{varela2026predictingmagicmeasurements}. Nevertheless, the same graph organizes nonlinear diagnostics beyond reduced robustness. The squared Pauli expectation profile connects two complementary nonlinear diagnostics: weighted graph functionals can yield quadratic magic witnesses, whereas its global concentration underlies stabilizer R\'enyi entropy~\cite{xu2025simultaneous,leone2022stabilizer}. Both constructions use the independent-set supports of stabilizer states as a common graph-theoretic reference.
	Appendix~\ref{app:sre} develops these connections and discusses their extension and limitations for mixed states. Thus graph support constraints and Pauli dependency codes provide a common language for exactly computable reduced witnesses and broader diagnostics of nonstabilizerness.
	
	\emph{Outlook.}---
	\label{sec:outlook}
	Our work identifies an exactly solvable core and separates the remaining difficulty into sign dependencies and graph structure. When $G_\Mcal$ is perfect but active dependencies are present, as in Ising models on graphs with cycles, MWIS remains tractable while the admissible signs require coset decoding of the dependency code. Characterizing dependency codes that admit efficient decoding is left for future work. Furthermore, a complementary direction concerns imperfect frustration graphs without active dependencies. Replacing $\alpha_{|\mathbm{y}|}(G_\Mcal)$ in Eq.~\eqref{eq:dual-relaxed} by the weighted Lov\'asz bound gives a sound polynomial-time lower bound on $\RoM_\Mcal(\rho)$. Determining which imperfect graphs keep this semidefinite bound tight is also an open problem.
	
	\emph{Acknowledgments.}---
	The authors acknowledge discussions about graph theory with Eloïc Vallée, and viewpoints from quantum nonlocality with Jin-fu Chen.  J.T. acknowledges support from the European Union's Horizon Europe research and innovation programme through the ERC StG FINE-TEA-SQUAD (Grant No.~101040729).
	J.T. also acknowledges support from the Dutch National Growth Fund (NGF), as part of the Quantum Delta NL programme.
	This work is part of the ``Quantum Inspire, the Dutch Quantum Computer in the Cloud'' project (project number NWA.1292.19.194) of the NWA research programme ``Research on Routes by Consortia (ORC)'', which is funded by the Netherlands Organization for Scientific Research (NWO). M.H. acknowledges the support from the National
	Research Foundation, Singapore through the National
	Quantum Office, hosted in A*STAR, under its Centre for
	Quantum Technologies Funding Initiative (S24Q2d0009). F.B. acknowledges financial support from the European Union's Horizon Europe research and innovation programme under the Marie Sk{\l}odowska--Curie Action for the project No.~101148556 (ENCHANT).
	The views and opinions expressed here are solely those of the authors and do not necessarily reflect those of the funding institutions.
	The funding institutions cannot be held responsible for them.
	
	\makeatletter
	\immediate\write\@auxout{\string\citation{BFLongBibliographyControl}}
	\makeatother
	\bibliographystyle{apsrev4-2}
	\bibliography{references}
	\onecolumngrid
	\newpage
	\appendix
	\begin{bibunit}[apsrev4-2]
		\makeatletter
		\def\@bibunitname{appendix}
		\let\hyper@natlinkstart\@gobble
		\let\hyper@natlinkend\relax
		\def\hyper@natlinkbreak#1#2{#1}
		\let\hyper@natanchorstart\@gobble
		\let\hyper@natanchorend\relax
		\makeatother
		
		\section{Reduced polytope structure}
		\label{app:binary-symplectic}
		
		For a Pauli measurement set $\Mcal$, the reduced stabilizer polytope $\STAB(\Mcal)$ of Eq.~\eqref{eq:reduced-stab-polytope} is the projection of the full stabilizer polytope onto the measured coordinates, and $\RoM_\Mcal(\rho)$ of Eq.~\eqref{eq:reduced-primal} is a stabilizer monotone with $\RoM_\Mcal(\rho)\le\RoM(\rho)$~\cite{varela2026predictingmagicmeasurements}.
		$\STAB(\Mcal)$ has the combinatorial $V$-representation
		\begin{equation}
			\begin{aligned}
				\STAB(\Mcal)&=\conv\{\mathbm{v}_{S,f}\}_{S\in\Ind_{\max}(G_\Mcal),\,f\in\Bset_S},\\
				(\mathbm{v}_{S,f})_i&=\begin{cases}f(P_i),&P_i\in S,\\0,&P_i\notin S,\end{cases}
			\end{aligned}
			\label{eq:vrep}
		\end{equation}
		with $\Ind_{\max}(G_\Mcal)$ the maximal independent sets of the frustration graph and $\Bset_S$ the admissible signs.
		In the worst case, this representation has size $2^{\Theta(\min\{m,n^2\})}$ and is constructible in comparable time.
		Deciding membership in $\STAB(\Mcal)$ is NP-hard~\cite{varela2026predictingmagicmeasurements}.
		Therefore, exact evaluation of $\RoM_\Mcal(\rho)$ cannot be polynomial in $n$ and $m$ for arbitrary instances unless $\mathrm P=\mathrm{NP}$.
		The structural conditions in the main text isolate a subclass with polynomial exact evaluation.
		
		The dependency structure is most transparent in the binary symplectic representation.
		An $n$-qubit Pauli operator modulo phase is a vector $\mathbm{p}=(\mathbm{x}\mid \mathbm{z})\in\Ftwo^{2n}$ with $P(\mathbm{p})\propto\bigotimes_j X^{x_j}Z^{z_j}$.
		Two Pauli operators anticommute iff their symplectic inner product $[\mathbm{p},\mathbm{q}]=\mathbm{x}\cdot \mathbm{z}'+\mathbm{z}\cdot \mathbm{x}'=\mathbm{p}^\top J \mathbm{q}$ is $1$, where $J=\left(\begin{smallmatrix}0&I_n\\I_n&0\end{smallmatrix}\right)$.
		Collecting the vectors of $\Mcal$ as columns of $M_\Mcal\in\Ftwo^{2n\times m}$, the symplectic Gram matrix $\Omega_\Mcal=M_\Mcal^\top J M_\Mcal$ carries the pairwise commutation data, and $G_\Mcal$ is the support graph of its off diagonal entries.
		Every nonzero vector in $\ker_{\Ftwo}(M_\Mcal)$ is a phase-free dependency, and its inclusion-minimal supports are the hyperedges of $\Hcal_\Mcal$.
		For a noncommuting support, an ordered product of Pauli operator representatives can have phase $\pm1$ or $\pm i$ and its phase can depend on the ordering.
		Only an active, commuting dependency has the order-independent $\pm1$ syndrome used below.
		
		Let $S=\{P_i:i\in I\}$ be a commuting context with support submatrix $M_S$.
		A kernel vector $\mathbm{r}\in\ker_{\Ftwo}(M_S)$ encodes a Pauli product relation
		\begin{equation}
			\prod_{i\in I}P_i^{r_i}=(-1)^{\sigma(\mathbm{r})}\mathbb{1},
			\label{eq:relation}
		\end{equation}
		with syndrome $\sigma(\mathbm{r})\in\Ftwo$ set by the product phase.
		As the $P_i$ in $S$ pairwise commute and satisfy $P_i^2=\mathbb{1}$, the left hand side of Eq.~\eqref{eq:relation} squares to $\mathbb{1}$, hence the phase is $\pm1$ and $\sigma(\mathbm{r})$ is well defined.
		Writing $f(P_i)=(-1)^{s_i}$, consistency with every relation requires $\mathbm{r}\cdot \mathbm{s}=\sigma(\mathbm{r})$.
		Under this identification, the admissible sign set is isomorphic to the affine code
		\begin{equation}
			\Bset_S\cong\{\mathbm{s}\in\Ftwo^{|S|}:\ \mathbm{r}\cdot \mathbm{s}=\sigma(\mathbm{r})\ \ \forall\,\mathbm{r}\in\ker_{\Ftwo}(M_S)\}.
			\label{eq:sign-code}
		\end{equation}
		Its parity checks are the dependency relations supported in $S$.
		The inclusion-minimal ones are the dependencies of Definition~\ref{def:active}.
		These sign dependencies are invisible to $G_\Mcal$ alone.
		
		\section{Relaxation and perfect graph solution}
		\label{app:proof-mwis}

		We first characterize the admissible sign sets and prove the exactness criterion, then derive the relaxation of the reduced stabilizer polytope.
		
		\begin{lemmaCollapseRestatement}
			
			$\Mcal$ has no active dependencies if and only if $\Bset_S=\mathsf{Sign}(S)$ for every commuting context $S\in\Ind(G_\Mcal)$.
		\end{lemmaCollapseRestatement}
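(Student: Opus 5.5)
We work in the binary symplectic picture of Appendix~\ref{app:binary-symplectic}, where, for a commuting context $S$, the admissible sign set is the affine code of Eq.~\eqref{eq:sign-code}, with parity checks indexed by $\ker_{\Ftwo}(M_S)$. The statement then becomes: $\Bset_S$ is all of $\Ftwo^{|S|}$ for every $S\in\Ind(G_\Mcal)$ iff no inclusion-minimal dependency lies inside an independent set. Both directions go through the observation that $\Bset_S=\mathsf{Sign}(S)$ holds exactly when $\ker_{\Ftwo}(M_S)=\{0\}$. If the kernel is trivial, there are no constraints and every $\mathbm{s}$ is admissible. Conversely, any nonzero $\mathbm{r}$ in the kernel imposes $\mathbm{r}\cdot\mathbm{s}=\sigma(\mathbm{r})$, and this is violated by half of all $\mathbm{s}$: flipping a single coordinate $s_i$ with $r_i=1$ toggles the parity. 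So the kernel being nontrivial forces $\Bset_S\subsetneq\mathsf{Sign}(S)$.

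For the direction ``active dependency $\Rightarrow$ strict inclusion'', let $T$ be an active dependency. Then $T\in\Ind(G_\Mcal)$ is itself a commuting context, and its indicator vector is a nonzero element of $\ker_{\Ftwo}(M_T)$, since $\prod_{P\in T}P\propto\mathbb{1}$. The observation above then gives $\Bset_T\neq\mathsf{Sign}(T)$. For instance, take $s$ with $\mathbm{1}_T\cdot\mathbm{s}\neq\sigma(\mathbm{1}_T)$. The example $\{X_1X_2,Z_1Z_2,Y_1Y_2\}$ illustrates exactly this.

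For the converse, suppose that $\Bset_S\neq\mathsf{Sign}(S)$ for some commuting $S$. The observation gives a nonzero $\mathbm{r}\in\ker_{\Ftwo}(M_S)$. Among all nonzero kernel vectors supported inside $\operatorname{supp}(\mathbm{r})$, choose one, $\mathbm{r}'$, of inclusion-minimal support $T$. Then $\prod_{P\in T}P\propto\mathbb{1}$, and $T$ is minimal among subsets of $\Mcal$ with this property. Any proper subset with trivial product would give a kernel vector of smaller support, and that vector is still supported in $S$ because $T\subseteq S$. Hence $T$ is a dependency in the sense of Definition~\ref{def:active}. As a subset of the commuting set $S$, it is pairwise commuting, so $T\in\Ind(G_\Mcal)$ and $T$ is active. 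This contradicts the hypothesis.

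The main subtlety, which I expect to be the only real obstacle, is identifying ``product $\propto\mathbb{1}$'' with ``indicator vector in $\ker_{\Ftwo}M$''. This holds because multiplying Pauli operators adds their symplectic vectors modulo phase. Minimality in Definition~\ref{def:active} is taken over subsets of $\Mcal$, but any subset of $T$ is automatically contained in $S$, so minimality within $S$ and within $\Mcal$ coincide. Finally, well-definedness of the syndrome on commuting supports was established after Eq.~\eqref{eq:relation} and can be assumed.
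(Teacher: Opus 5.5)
Your proof is correct and follows the same argument as the paper. An active dependency is itself a commuting context on which its parity check excludes half the sign patterns; conversely, a nonzero relation in $\ker_{\Ftwo}(M_S)$ yields, after choosing inclusion-minimal support, a dependency inside $S$ that is automatically active. Your explicit trivial-kernel characterization and your remark that minimality within $S$ and within $\Mcal$ coincide simply spell out steps the paper leaves implicit.
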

		
		\begin{proof}
			It suffices to prove the equivalent negated statement that
			$\Mcal$ has an active dependency if and only if
			$\Bset_S\neq\mathsf{Sign}(S)$ for some commuting context
			$S\in\Ind(G_\Mcal)$. If $T$ is an active dependency with
			$\prod_{P\in T}P=\eta\mathbb{1}$, where $\eta\in\{\pm1\}$,
			then $T$ is itself a commuting context.
			The resulting parity check
			$\prod_{P\in T}f(P)=\eta$ excludes half of the sign patterns,
			hence $\Bset_T\neq\mathsf{Sign}(T)$.
			
			Conversely, suppose
			$\Bset_S\neq\mathsf{Sign}(S)$ for some commuting context $S$.
			By Eq.~\eqref{eq:sign-code}, there is a nonzero Pauli product
			relation supported in $S$.
			Choosing one with inclusion-minimal support gives a dependency
			$T\subseteq S$.
			Since all Pauli operators in $S$ pairwise commute, $T$ is active.
			Negating this equivalence proves the lemma.
		\end{proof}
		
		\begin{theoremMwisRestatement}
			
			The equality $\STAB(\Mcal)=\widetilde{\STAB}(\Mcal)$ holds if and only if $\Mcal$ has no active dependencies.
			Under these equivalent conditions, $\RoM_\Mcal(\rho)=\widetilde{\RoM}_\Mcal(\rho)$ for every state $\rho$.
		\end{theoremMwisRestatement}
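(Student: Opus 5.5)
The plan is to prove the two directions of the polytope equality separately, using Lemma~\ref{lem:collapse} as the bridge between the dependency condition and the sign sets. The equality of the monotones then follows because the two linear programs coincide.

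\emph{Sufficiency.} Assume $\Mcal$ has no active dependencies. Every $S\in\Ind_{\max}(G_\Mcal)$ is a commuting context, so Lemma~\ref{lem:collapse} gives $\Bset_S=\mathsf{Sign}(S)$. The generating sets in Eq.~\eqref{eq:exact-vrep-new} and Eq.~\eqref{eq:relaxed-polytope} are therefore literally the same finite set $\{\mathbm{v}_{S,f}\}$, and their convex hulls coincide. Since the vertex sets agree, the primal programs in Eq.~\eqref{eq:reduced-primal} and Eq.~\eqref{eq:relaxed-rom} are identical, which gives $\RoM_\Mcal(\rho)=\widetilde{\RoM}_\Mcal(\rho)$ for every $\rho$.

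\emph{Necessity.} This is the main obstacle, because a strict inclusion of generating sets does not by itself imply a strict inclusion of convex hulls. Let $T$ be an active dependency with $\prod_{P\in T}P=\eta\mathbb{1}$, and extend $T$ to some $S\in\Ind_{\max}(G_\Mcal)$. Choose $f\in\mathsf{Sign}(S)$ with $\prod_{P\in T}f(P)=-\eta$. My aim is to separate $\mathbm{v}_{S,f}$ from $\STAB(\Mcal)$ using the linear functional
\[
\ell(\mathbm{x})=\sum_{P_i\in S}f(P_i)\,x_i .
\]

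1. At $\mathbm{v}_{S,f}$ the functional equals $|S|$.
2. Any vertex $\mathbm{v}_{S',g}$ with $g\in\Bset_{S'}$ has $\ell\le|S\cap S'|$. The sum has exactly $|S\cap S'|$ terms, each bounded by $1$ in magnitude, so it reaches $|S|$ only if $S'\supseteq S$. By maximality this forces $S'=S$, and then $g=f$ on all of $S$.
3. The case $S'=S$, $g=f$ cannot occur, because $g\in\Bset_S$ must satisfy the parity check $\prod_{P\in T}g(P)=\eta$, while $f$ violates it.

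Every vertex of $\STAB(\Mcal)$ therefore has $\ell\le|S|-1$, whereas $\mathbm{v}_{S,f}$ has $\ell=|S|$. This gives $\mathbm{v}_{S,f}\notin\STAB(\Mcal)$. Since $\mathbm{v}_{S,f}\in\widetilde{\STAB}(\Mcal)$, the inclusion $\STAB(\Mcal)\subseteq\widetilde{\STAB}(\Mcal)$ is strict.

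As an alternative to this direct argument, one could use the fact that exact $\Bset_T$ excludes half of the patterns on $T$ and apply the same functional restricted to $T$. The maximality argument above seems cleaner, since it uses only that each vertex is supported on a maximal independent set.

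Finally, I would note that equal polytopes yield equal $\RoM$ values for every state by the formulation in Eq.~\eqref{eq:reduced-primal}, which depends only on the vertex set. This gives the stated consequence, Eq.~\eqref{eq:no-active-tight}.
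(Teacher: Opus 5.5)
Your proof is correct and matches the paper's argument. Sufficiency uses Lemma~\ref{lem:collapse} to make the two generating sets identical. Necessity extends an active dependency $T$ to a maximal independent set $S$, picks a parity-violating sign pattern, and uses the $\pm1$ coordinates on $S$ together with maximality of $S$ to exclude $\mathbm{v}_{S,f}$ from $\STAB(\Mcal)$. The only difference is presentation: you state the conclusion as an explicit valid inequality $\sum_{P_i\in S}f(P_i)x_i\le|S|-1$ on $\STAB(\Mcal)$, whereas the paper argues by contradiction that every term of a convex combination equal to $\mathbm{v}_{S,\bar f}$ must coincide with it on $S$.
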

		
		\begin{proof}
			Equation~\eqref{eq:vrep} gives the exact vertex set $\{\mathbm{v}_{S,f}\}_{S\in\Ind_{\max}(G_\Mcal),\,f\in\Bset_S}$ of $\STAB(\Mcal)$~\cite{varela2026predictingmagicmeasurements}. If $\Mcal$ has no active dependencies, Lemma~\ref{lem:collapse} gives $\Bset_S=\mathsf{Sign}(S)$ for every commuting context. These vertices therefore range over all sign patterns on each maximal independent set, and the exact and relaxed vertex representations coincide.
			Hence $\STAB(\Mcal)=\widetilde{\STAB}(\Mcal)$.
			
			Conversely, let $T$ be an active dependency satisfying $\prod_{P\in T}P=\eta\mathbb{1}$, where $\eta\in\{\pm1\}$, and let $S\in\Ind_{\max}(G_\Mcal)$ contain $T$.
			Choose $\bar f\in\mathsf{Sign}(S)$ such that $\prod_{P\in T}\bar f(P)=-\eta$. Then $\bar f\notin\Bset_S$, whereas $\mathbm{v}_{S,\bar f}\in\widetilde{\STAB}(\Mcal)$.
			Assume that $\mathbm{v}_{S,\bar f}\in\STAB(\Mcal)$, and there exist $\lambda_k>0$, $S_k\in\Ind_{\max}(G_\Mcal)$, and $f_k\in\Bset_{S_k}$ with $\sum_k\lambda_k=1$ such that
			\begin{equation*}
				\mathbm{v}_{S,\bar f}=\sum_k\lambda_k\mathbm{v}_{S_k,f_k}.
			\end{equation*}
			For every $P\in S$, the corresponding coordinate on the left is $\bar f(P)\in\{\pm1\}$, whereas $(\mathbm{v}_{S_k,f_k})_P\in[-1,1]$ for every $k$. Equality in the convex combination therefore requires $(\mathbm{v}_{S_k,f_k})_P=\bar f(P)$ for every $k$ and every $P\in S$.
			Consequently, $S\subseteq S_k$ for every $k$. The maximality of $S$ implies $S_k=S$, and equality on $S$ then gives $f_k=\bar f$, contradicting $f_k\in\Bset_S$ and $\bar f\notin\Bset_S$.
			Thus $\mathbm{v}_{S,\bar f}\in\widetilde{\STAB}(\Mcal)\setminus\STAB(\Mcal)$, proving $\STAB(\Mcal)\subsetneq\widetilde{\STAB}(\Mcal)$ whenever an active dependency exists.
			
			Under the equivalent no active dependency conditions, equality of the polytopes gives $\RoM_\Mcal(\rho)=\widetilde{\RoM}_\Mcal(\rho)$ for every state $\rho$.
			The exact graph dual then follows from Proposition~\ref{prop:relaxation}.
		\end{proof}
		
		\begin{propRelaxationRestatement}
			
			For any Pauli measurement set $\Mcal=\{P_1,\dots,P_m\}$ and a quantum state $\rho$,
			\begin{equation}
				\widetilde{\RoM}_\Mcal(\rho)=\max_{\mathbm{y},\mu}\Big\{\mathbm{b}_{\Mcal(\rho)}^\top \mathbm{y}+\mu: \alpha_{|\mathbm{y}|}(G_\Mcal)+|\mu|\le 1\Big\},
				\label{eq:dual-relaxed-app}
			\end{equation}
			where $|\mathbm{y}|:=(|y_1|,\dots,|y_m|)^\top$. Testing a candidate dual point requires evaluating $\alpha_{|\mathbm{y}|}(G_\Mcal)$. If the inequality fails, a maximizing independent set and its aligned sign pattern identify a violated constraint, so one MWIS computation provides a separation oracle for the displayed feasible region.
		\end{propRelaxationRestatement}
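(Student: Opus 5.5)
The approach is standard LP duality applied to the relaxed primal Eq.~\eqref{eq:relaxed-rom}, followed by collapsing the exponentially many dual constraints into a single combinatorial one. The relaxed primal is a finite linear program over the vertex set of $\widetilde{\STAB}(\Mcal)$. It is feasible, since $\mathbm{b}_{\Mcal(\rho)}\in[-1,1]^m$ lies in the affine span, and indeed $\widetilde{\STAB}(\Mcal)$ contains the points $\pm\mathbm{e}_i$ as averages of vertices. It is also bounded below by $1$, because $\sum_v x_v=1$. Strong duality therefore gives exactly the analogue of Eq.~\eqref{eq:dual-exact}:
\begin{equation*}
\widetilde{\RoM}_\Mcal(\rho)=\max_{\mathbm{y},\mu}\Big\{\mathbm{b}_{\Mcal(\rho)}^\top\mathbm{y}+\mu:\ \big|\mathbm{v}_{S,f}^\top\mathbm{y}+\mu\big|\le1,\ \forall S\in\Ind_{\max}(G_\Mcal),\ f\in\mathsf{Sign}(S)\Big\}.
\end{equation*}
This is obtained by splitting $\mathbm{x}=\mathbm{x}^+-\mathbm{x}^-$ and writing the Lagrangian. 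I will take this dual form as given, since it is the same derivation the paper cites for Eq.~\eqref{eq:dual-exact}.

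The key step is to show that the family of constraints is equivalent to $\alpha_{|\mathbm{y}|}(G_\Mcal)+|\mu|\le1$. For fixed $S$, the quantity $\max_{f\in\mathsf{Sign}(S)}|\sum_{P_i\in S}f(P_i)y_i+\mu|$ is attained by aligning every sign either with $\sgn(y_i)$ or against it. Because $f$ ranges over all of $\mathsf{Sign}(S)$, with no parity checks, this maximum equals $\sum_{P_i\in S}|y_i|+|\mu|$, with the alignment chosen to match $\sgn\mu$.

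Next I maximize over $S\in\Ind_{\max}(G_\Mcal)$. The weights $|y_i|$ are nonnegative, so every independent set can be enlarged to a maximal one without decreasing its weight. Hence $\max_{S\in\Ind_{\max}}\sum_{S}|y_i|=\alpha_{|\mathbm{y}|}(G_\Mcal)$ as defined in Eq.~\eqref{eq:wmis}. The whole constraint family is therefore equivalent to the single inequality, which proves Eq.~\eqref{eq:dual-relaxed-app}.

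For the separation-oracle claim, suppose a candidate $(\mathbm{y},\mu)$ violates the inequality. Compute a maximizing independent set $I$, extend it to a maximal independent set $S\supseteq I$, and set $f(P_i)=\sgn(\mu)\sgn(y_i)$, choosing an arbitrary sign when $y_i=0$. Then $|\mathbm{v}_{S,f}^\top\mathbm{y}+\mu|=\alpha_{|\mathbm{y}|}+|\mu|>1$, so this pair $(S,f)$ is an explicit violated constraint of the LP.

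The main obstacle is the sign-alignment step. It is precisely where the relaxation matters: with $\Bset_S$ in place of $\mathsf{Sign}(S)$, the aligned pattern might be inadmissible, so that step should be stated carefully. The remaining subtleties are zero weights and $\mu=0$, and the fact that restricting to maximal independent sets loses nothing; both are handled by the nonnegativity of the weights.
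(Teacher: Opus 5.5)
Your proposal is correct and follows the paper's proof essentially step for step. You use strong LP duality for the relaxed primal, then the aligned-sign evaluation $\max_{f\in\mathsf{Sign}(S)}\big|\sum_{P_i\in S}f(P_i)y_i+\mu\big|=\sum_{P_i\in S}|y_i|+|\mu|$, and then nonnegativity of the weights to replace maximal independent sets by all independent sets, which yields $\alpha_{|\mathbm{y}|}(G_\Mcal)$. One small fix is needed in the separation oracle: write $f(P_i)=s\,\sgn(y_i)$ with $s=\sgn(\mu)$ for $\mu\neq0$ and $s\in\{\pm1\}$ arbitrary for $\mu=0$, since $\sgn(0)=0$ is not a valid sign; this is how the paper states it.
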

		
		\begin{proof}
			Write $\mathbm{b}=\mathbm{b}_{\Mcal(\rho)}$ for simplicity.
			The primal in Eq.~\eqref{eq:relaxed-rom} is feasible for every $\mathbm{b}$ (the relaxed vertices affinely span $\R^m$) and bounded below ($\|\mathbm{x}\|_1\ge\sum_{\tilde v}x_{\tilde v}=1$).
			Strong linear programming duality therefore gives~\cite{hamaguchi2024handbook}
			\begin{equation}
				\widetilde{\RoM}_\Mcal(\rho)
				=\max_{\mathbm{y},\mu}\Big\{\mathbm{b}^\top\mathbm{y}+\mu:\
				\big|\mathbm{v}_{S,f}^\top\mathbm{y}+\mu\big|\le 1, \forall S\in\Ind_{\max}(G_\Mcal),f\in\mathsf{Sign}(S)\Big\}.
			\end{equation}
			For fixed $S$ and $(\mathbm{y},\mu)$, the worst sign pattern gives
			\begin{equation}
				\max_{f\in\mathsf{Sign}(S)}\Big|\sum_{P_i\in S}f(P_i) y_i+\mu\Big|=\sum_{P_i\in S}|y_i|+|\mu|,
				\label{eq:relaxed-sep}
			\end{equation}
			Choose $s=\sgn(\mu)$ when $\mu\neq0$ and either $s\in\{\pm1\}$ when $\mu=0$; set $f(P_i)=s\,\sgn(y_i)$ for $y_i\neq0$, with arbitrary signs on zero-weight coordinates.
			For nonnegative weights $|y_i|$, maximizing the right hand side over maximal independent sets is equivalent to maximizing over all independent sets, hence
			\begin{equation}
				\max_{S\in\Ind_{\max}(G_\Mcal)}\sum_{P_i\in S}|y_i|=\alpha_{|\mathbm{y}|}(G_\Mcal).
			\end{equation}
			Thus the dual constraints hold for all $(S,f)$ iff $\alpha_{|\mathbm{y}|}(G_\Mcal)+|\mu|\le1$, which is Eq.~\eqref{eq:dual-relaxed-app}.
		\end{proof}
		
		In the absence of active dependencies, Proposition~\ref{prop:relaxation} and Theorem~\ref{thm:mwis} reduce the exact problem to
		\begin{equation}
			\RoM_\Mcal(\rho)=\max_{\mathbm{y},\mu}\Big\{\mathbm{b}_{\Mcal(\rho)}^\top \mathbm{y}+\mu:\ \alpha_{|\mathbm{y}|}(G_\Mcal)+|\mu|\le 1\Big\}.
			\label{eq:graph-program}
		\end{equation}
		
		For a graph $G=(V,E)$ and nonnegative weights $\mathbm{w}$, define the weighted Lov\'asz number by
		\begin{equation}\label{eq:theta-sdp}
			\vartheta(G,\mathbm{w}):=\max\Big\{\textstyle\sum_{i,j}\sqrt{w_iw_j}\,X_{ij}:\ X\succeq0,
			\tr (X)=1,\ X_{ij}=0,\ \forall\,\{i,j\}\in E\Big\}.
		\end{equation}
		Let $\mathcal{Q}(G)$ denote the family of all cliques of $G$. A fractional clique cover assigns a nonnegative weight $\lambda_Q$ to every $Q\in\mathcal{Q}(G)$. Each node $v$ receives the total weight of the cliques containing it, and this total weight is at least its prescribed weight $w_v$. The weighted fractional clique cover number is the minimum total clique weight defined as
		\begin{equation}
			\bar\chi_{\mathbm{w}}(G)
			:=\min_{\lambda_Q\ge0}\Big\{
			\sum_{Q\in\mathcal{Q}(G)}\lambda_Q:\
			\sum_{\substack{Q\in\mathcal{Q}(G)\\v\in Q}}\lambda_Q\ge w_v, 
			\forall v\in V\Big\}.
			\label{eq:fractional-clique-cover}
		\end{equation}
		
		\begin{lemma}[Lov\'asz sandwich\cite{lovasz1979shannon,grotschel1988geometric}]
			\label{lem:lovasz-sandwich}
			For every finite graph $G$ and every $\mathbm{w}\in\mathbb{R}_{\ge0}^{V}$,
			\begin{equation}
				\alpha_{\mathbm{w}}(G)\le\vartheta(G,\mathbm{w})\le\bar\chi_{\mathbm{w}}(G).
				\label{eq:sandwich}
			\end{equation}
		\end{lemma}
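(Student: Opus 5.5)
We prove the two inequalities of Eq.~\eqref{eq:sandwich} separately. Both come from exhibiting explicit feasible points: a feasible matrix $X$ for the maximization Eq.~\eqref{eq:theta-sdp} for the left inequality, and, for the right inequality, a bound on every feasible $X$ using an arbitrary feasible fractional clique cover $\{\lambda_Q\}$.

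For the left inequality, let $I\in\Ind(G)$ attain $\alpha_{\mathbm{w}}(G)$. We may assume $\alpha_{\mathbm{w}}(G)>0$, since otherwise the claim is trivial because $\vartheta\ge0$; a feasible $X$ always exists, e.g.\ $X=e_ve_v^\top$. Define $u_i=\sqrt{w_i}$ for $i\in I$ and $u_i=0$ otherwise, and set
\begin{equation*}
X=\frac{uu^\top}{\sum_{i\in I}w_i}.
\end{equation*}
Then $X\succeq0$ and $\tr X=1$. Moreover $X_{ij}=0$ on every edge, because no edge has both endpoints in $I$. Its objective value is
\begin{equation*}
\frac{\big(\sum_{i\in I}w_i\big)^2}{\sum_{i\in I}w_i}=\sum_{i\in I}w_i=\alpha_{\mathbm{w}}(G).
\end{equation*}

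For the right inequality, fix a feasible $X$ and a feasible cover $\{\lambda_Q\}$ with total weight $\Lambda=\sum_Q\lambda_Q$. Write $X=V^\top V$ with columns $x_i$, so that $\sum_i\|x_i\|^2=1$ and $x_i\perp x_j$ for every edge $\{i,j\}$. The objective equals $\|\sum_i\sqrt{w_i}\,x_i\|^2$.

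The key step is a Cauchy--Schwarz argument over the cliques. First replace the covering inequality by an equality: shrinking weights does not increase the objective only if we handle it monotonically, so instead define $\mu_{Q,i}=\lambda_Q/c_i$ for $i\in Q$, where $c_i=\sum_{Q\ni i}\lambda_Q\ge w_i$. Then $\sum_{Q\ni i}\mu_{Q,i}=1$, and we can write
\begin{equation*}
\sum_i\sqrt{w_i}\,x_i=\sum_Q\sum_{i\in Q}\mu_{Q,i}\sqrt{w_i}\,x_i .
\end{equation*}
Here the structure of cliques matters: in this paper's graph cliques are complementary to the independent sets. The orthogonality $X_{ij}=0$ is imposed on edges, and a clique $Q$ consists of pairwise adjacent nodes, so the vectors $\{x_i\}_{i\in Q}$ are mutually orthogonal. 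Applying Cauchy--Schwarz in the form $\|\sum_Q a_Q\|^2\le(\sum_Q\lambda_Q)\sum_Q\|a_Q\|^2/\lambda_Q$, and then Pythagoras within each clique, gives
\begin{equation*}
\Big\|\sum_i\sqrt{w_i}\,x_i\Big\|^2\le\Lambda\sum_Q\frac{1}{\lambda_Q}\sum_{i\in Q}\mu_{Q,i}^2\,w_i\,\|x_i\|^2 .
\end{equation*}
Since $\mu_{Q,i}^2/\lambda_Q=\lambda_Q/c_i^2$, summing over the cliques containing $i$ yields the coefficient $w_i/c_i\le1$. Hence the bound is at most $\Lambda\sum_i\|x_i\|^2=\Lambda$. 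Minimizing over covers then gives $\vartheta(G,\mathbm{w})\le\bar\chi_{\mathbm{w}}(G)$.

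The main obstacle is this second inequality, in particular the careful normalization $\mu_{Q,i}$ that absorbs covering slack. The zero-weight case, $c_i=0$, also needs care: there $w_i=0$, so such terms drop out. If the Cauchy--Schwarz bookkeeping misbehaves, the fallback is LP/SDP duality, exhibiting a dual certificate for $\vartheta$ built from $\{\lambda_Q\}$.
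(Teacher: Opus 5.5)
Your proof is correct. The left inequality is the paper's own argument: a rank-one feasible matrix built from an optimal independent set.

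For the right inequality you take a different route. The paper only sketches this step. It says the SDP dual of Eq.~\eqref{eq:theta-sdp} is bounded by every fractional clique cover, defers details to Lov\'asz and Gr\"otschel--Lov\'asz--Schrijver, and so implicitly builds a dual certificate from $\{\lambda_Q\}$ and invokes weak duality. You instead bound every primal feasible $X$ directly, using four steps:
\begin{itemize}
\item a Gram factorization of $X$;
\item the splitting $\sqrt{w_i}\,x_i=\sum_{Q\ni i}(\lambda_Q/c_i)\sqrt{w_i}\,x_i$;
\item Cauchy--Schwarz across cliques;
\item Pythagoras inside each clique, which is exactly where the edge orthogonality meets the clique structure.
\end{itemize}
This yields $\|\sum_i\sqrt{w_i}\,x_i\|^2\le\Lambda\sum_i (w_i/c_i)\|x_i\|^2\le\Lambda$.

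The two routes are the same inequality seen from opposite sides. Set $s^{(Q)}_i=\mu_{Q,i}\sqrt{w_i}$ on $Q$. Your computation then amounts to $\sqrt{\mathbm w}\sqrt{\mathbm w}^\top\preceq\Lambda\sum_Q s^{(Q)}s^{(Q)\top}/\lambda_Q=\Lambda\,\mathrm{diag}(w_i/c_i)+Y$ with $Y$ supported on edges. That is precisely an explicit dual-feasible point $(\Lambda,Y)$ for the theta SDP.

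Your version gives a fully self-contained, elementary proof with no SDP duality. The paper's framing ties the bound to the dual formulation used in Lemma~\ref{lem:perfect-collapse} and the GLS algorithm.

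Two minor cleanups: delete the confused sentence about ``shrinking weights \ldots\ monotonically'', since the $\mu_{Q,i}$ normalization already absorbs the covering slack, and drop the ``fallback'' remark. The zero-weight bookkeeping is fine, because $\lambda_Q>0$ forces $c_i\ge\lambda_Q>0$ for every $i\in Q$.
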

		
		\begin{proof}[Proof]
			Every independent set yields a feasible rank one matrix in Eq.~\eqref{eq:theta-sdp} whose objective equals its weight.
			The semidefinite programming (SDP) dual is bounded by every fractional clique cover, giving the second inequality.
			The detailed proof is referred to\cite{lovasz1979shannon,grotschel1988geometric}.
		\end{proof}
		
		\begin{lemma}[Perfect graph collapse~\cite{grotschel1981ellipsoid}]
			\label{lem:perfect-collapse}
			If $G$ is perfect, then for every $\mathbm{w}\in\mathbb{R}_{\ge0}^{V}$,
			\begin{equation}
				\alpha_{\mathbm{w}}(G)=\vartheta(G,\mathbm{w})=\bar\chi_{\mathbm{w}}(G).
				\label{eq:perfect-collapse}
			\end{equation}
			Moreover, with the empty clique included,
			\begin{equation}
				\big\{\mathbm{z}\ge0:\alpha_{\mathbm{z}}(G)\le1\big\}
				=\conv\big\{\mathbm{1}_Q:\ Q\text{ is a clique of }G\big\}.
				\label{eq:antiblocker}
			\end{equation}
			For rational input weights, a MWIS and its value are computable in time polynomial in the input size.
		\end{lemma}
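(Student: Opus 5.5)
The statement to prove is Lemma~\ref{lem:perfect-collapse}: for perfect $G$, the three weighted quantities coincide, the antiblocker identity \eqref{eq:antiblocker} holds, and MWIS is polynomial-time solvable. The plan is to derive everything from the polyhedral characterization of perfect graphs, namely the Lov\'asz--Chv\'atal theorem. It states that $G$ is perfect iff the stable set polytope $\mathrm{STAB}(G)=\conv\{\mathbm{1}_I: I\in\Ind(G)\}$ equals the fractional stable set polytope $\mathrm{QSTAB}(G)=\{\mathbm{x}\ge0:\sum_{v\in Q}x_v\le1\ \forall Q\in\mathcal Q(G)\}$.

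First, the equality $\alpha_{\mathbm{w}}=\bar\chi_{\mathbm{w}}$. The quantity $\alpha_{\mathbm{w}}(G)$ is $\max\{\mathbm{w}^\top\mathbm{x}:\mathbm{x}\in\mathrm{STAB}(G)\}$. By Lov\'asz--Chv\'atal, for perfect $G$ this equals $\max\{\mathbm{w}^\top\mathbm{x}:\mathbm{x}\in\mathrm{QSTAB}(G)\}$. The LP dual of the latter, with one variable $\lambda_Q\ge0$ per clique constraint and the covering condition $\sum_{Q\ni v}\lambda_Q\ge w_v$ from $\mathbm{x}\ge0$, is exactly \eqref{eq:fractional-clique-cover}. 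LP strong duality therefore gives $\alpha_{\mathbm{w}}(G)=\bar\chi_{\mathbm{w}}(G)$. Lemma~\ref{lem:lovasz-sandwich} then squeezes $\vartheta(G,\mathbm{w})$ between the two, giving \eqref{eq:perfect-collapse}.

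Second, the antiblocker identity \eqref{eq:antiblocker}. The left side is $\{\mathbm{z}\ge0:\mathbm{z}^\top\mathbm{1}_I\le1\ \forall I\in\Ind(G)\}$, the antiblocker of $\mathrm{STAB}(G)$. Since $\mathrm{STAB}(G)=\mathrm{QSTAB}(G)$ and the antiblocker of a down-closed polytope given by $\{\mathbm{x}\ge0:\mathbm{a}_k^\top\mathbm{x}\le1\}$ is $\conv\{\mathbm{a}_k\}$ intersected with the nonnegative orthant and down-closed, the left side equals the down-closure of $\conv\{\mathbm{1}_Q\}$. This follows from Fulkerson's antiblocking duality. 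The down-closure is harmless because every subset of a clique is a clique, including the empty clique. Hence the down-closure of $\conv\{\mathbm{1}_Q\}$ is $\conv\{\mathbm{1}_Q\}$ itself: a coordinatewise-dominated point of $\lambda$-combinations of clique indicators can be written as a combination of subclique indicators.

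The down-closure step is the main technical point to check. I would verify it by lowering one coordinate at a time: decreasing $z_v$ on the clique indicators containing $v$ amounts to shifting mass from $\mathbm{1}_Q$ to $\mathbm{1}_{Q\setminus\{v\}}$.

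Finally, polynomial-time computability comes from Gr\"otschel--Lov\'asz--Schrijver. $\vartheta(G,\mathbm{w})$ is an SDP solvable to any precision in polynomial time by the ellipsoid method. For perfect $G$ it equals the integer $\alpha_{\mathbm{w}}$ after scaling rational weights to integers, so rounding gives the exact value. An optimal independent set is recovered by self-reduction: delete a vertex $v$ whenever doing so leaves $\vartheta$ unchanged, using the fact that induced subgraphs of perfect graphs are perfect. This makes polynomially many oracle calls. I would cite \cite{grotschel1981ellipsoid,grotschel1988geometric} for these algorithmic details rather than reprove them.
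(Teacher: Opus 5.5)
Your proposal is correct and follows essentially the paper's route. Both arguments obtain $\alpha_{\mathbm{w}}=\bar\chi_{\mathbm{w}}$ from the Lov\'asz--Chv\'atal polyhedral characterization plus LP duality, pin down $\vartheta(G,\mathbm{w})$ with the sandwich lemma, prove the antiblocker identity via downward closure of the clique hull (using the empty clique), and rely on Gr\"otschel--Lov\'asz--Schrijver for the algorithmic claim.

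The differences are minor. The paper derives the antiblocker identity directly from the first part, namely a fractional clique cover of total weight at most one dominating $\mathbm{z}$, rather than quoting Fulkerson's theorem. You also spell out the $\vartheta$-rounding and vertex-deletion self-reduction, which the paper leaves to the citation.
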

		
		\begin{proof}[Proof]
			For a perfect graph, weighted stable set and fractional clique cover duality gives $\alpha_{\mathbm{w}}(G)=\bar\chi_{\mathbm{w}}(G)$~\cite{lovasz1972normal,lovasz1972characterization,chvatal1975certain}. Lemma~\ref{lem:lovasz-sandwich} then fixes $\vartheta(G,\mathbm{w})$ between equal endpoints, proving Eq.~\eqref{eq:perfect-collapse}.
			To obtain Eq.~\eqref{eq:antiblocker}, suppose $\alpha_{\mathbm{z}}(G)\le1$. Equation~\eqref{eq:perfect-collapse} gives a fractional clique cover $\{\lambda_Q\}$ with $\sum_Q\lambda_Q\le1$ and $\mathbm{z}\le\sum_Q\lambda_Q\mathbm{1}_Q$. The convex hull of clique incidence vectors, including the empty clique, is downward closed, hence it contains $\mathbm{z}$.
			Conversely, every independent set meets a clique in at most one node. Any convex combination of clique incidence vectors therefore satisfies $\alpha_{\mathbm{z}}(G)\le1$. This proves Eq.~\eqref{eq:antiblocker}, equivalently the perfect-graph antiblocker identity of Fulkerson~\cite{fulkerson1971blocking,fulkerson1972antiblocking}.
			The weighted algorithm follows from the optimization-separation equivalence in Theorem~3.1 and the perfect graph construction in Sec.~6, pp.~192--194, of Gr\"otschel, Lov\'asz, and Schrijver~\cite{grotschel1981ellipsoid}. Continuity extends the equality from rational to real nonnegative weights.
		\end{proof}
		
		\begin{theoremSolvableRestatement}
			If $\Mcal$ has no active dependencies and $G_\Mcal$ is perfect, then for any $n$-qubit quantum state $\rho$,
			\begin{equation}
				\RoM_\Mcal\big(\rho\big)
				=
				\max\!\Big(1,\ \max_{Q} \sum_{P\in Q}\big|\tr(P\rho)\big| \Big)
				\le\sqrt{\clnum(G_\Mcal)},
				\label{eq:closed-form-app}
			\end{equation}
			where $Q$ ranges over cliques of $G_\Mcal$, equivalently over pairwise anticommuting subsets of $\Mcal$.
			The bound is attained by any state supported on the $+1$ eigenspace of $|Q|^{-1/2}\sum_{P\in Q}P$ for a maximum anticommuting clique $Q$.
		\end{theoremSolvableRestatement}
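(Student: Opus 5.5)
We start from the graph program in Eq.~\eqref{eq:graph-program}, which Theorem~\ref{thm:mwis} and Proposition~\ref{prop:relaxation} make exact when there are no active dependencies. Write $\mathbm{b}=\mathbm{b}_{\Mcal(\rho)}$.

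\emph{Reduction to a polytope.} First fix $\mu$ and optimize over $\mathbm{y}$. Set $z_i=|y_i|$ and choose $\sgn(y_i)=\sgn(b_i)$. The constraint depends only on $|\mathbm{y}|$, so this choice is optimal, and the objective becomes $\sum_i|b_i|z_i+\mu$ subject to $\alpha_{\mathbm{z}}(G_\Mcal)\le 1-|\mu|$. Because $\alpha$ is positively homogeneous, the feasible set is $(1-|\mu|)\,\{\mathbm{z}\ge0:\alpha_{\mathbm{z}}\le1\}$, with $|\mu|\le1$. Perfectness now enters through Lemma~\ref{lem:perfect-collapse}, Eq.~\eqref{eq:antiblocker}: this set equals $\conv\{\mathbm{1}_Q\}$ over cliques $Q$, the empty clique included. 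A linear function attains its maximum over this polytope at a vertex, so the inner maximum is $(1-|\mu|)\,c$ with
\[
c=\max_Q\sum_{P\in Q}|\tr(P\rho)|\ge0 .
\]
The remaining problem is $\max_{|\mu|\le1}\{(1-|\mu|)c+\mu\}$. This objective is piecewise linear in $\mu$, so its maximum is $\max(c,1)$, attained at $\mu=0$ or $\mu=1$. This gives the equality in Eq.~\eqref{eq:closed-form-app}.

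\emph{Upper bound.} For a clique $Q$, the operators $P\in Q$ pairwise anticommute and are Hermitian involutions. Take unit-norm coefficients $a_P=\sgn(\tr(P\rho))\,|\tr(P\rho)|/\|\cdot\|_2$ and set $A=\sum_{P\in Q} a_PP$. Then $A^2=\sum_P a_P^2\,\mathbb{1}=\mathbb{1}$, because the cross terms cancel by anticommutation. Hence $\tr(A\rho)\le1$, which reads $\sum_P\tr(P\rho)^2\le1$. Cauchy--Schwarz then gives
\[
\sum_{P\in Q}|\tr(P\rho)|\le\sqrt{|Q|}\le\sqrt{\clnum(G_\Mcal)} .
\]
Since $\clnum\ge1$ for nonempty $\Mcal$, we also have $1\le\sqrt{\clnum(G_\Mcal)}$, so the maximum with $1$ stays below the bound.

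\emph{Attainment.} Let $Q$ be a maximum clique and $H=|Q|^{-1/2}\sum_{P\in Q}P$. The same anticommutation computation gives $H^2=\mathbb{1}$, so $H$ has eigenvalues $\pm1$. Its trace vanishes, so the $+1$ eigenspace is nonempty. For $\rho$ supported there, $\tr(H\rho)=1$, which means $\sum_P\tr(P\rho)=\sqrt{|Q|}$. Together with $\sum_P\tr(P\rho)^2\le1$, equality in Cauchy--Schwarz forces $\tr(P\rho)=|Q|^{-1/2}$ for every $P\in Q$. Hence the clique sum equals $\sqrt{\clnum(G_\Mcal)}$.

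\emph{Main obstacle.} The key step is justifying the replacement of the feasible region $\{\alpha_{\mathbm{z}}\le1\}$ by the clique polytope. This relies on Eq.~\eqref{eq:antiblocker}, and it is exactly where perfectness is used. The scaling by $1-|\mu|$ and the sign bookkeeping that reduces to nonnegative weights are routine by comparison.
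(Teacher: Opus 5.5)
Your proof is correct and follows essentially the paper's route: the exact graph dual, the scaled antiblocker identity supplied by perfectness, the anticommuting-involution bound $\sum_{P\in Q}\tr(P\rho)^2\le1$ combined with Cauchy--Schwarz, and the $+1$ eigenspace of $|Q|^{-1/2}\sum_{P\in Q}P$ for attainment. The differences are only cosmetic. You optimize over $\mu$ after the vertex argument, where the paper pairs an upper bound with the explicit feasible points $(\mathbf{0},1)$ and $(\mathbf{y}^{(Q)},0)$, and you get $\tr(P\rho)=|Q|^{-1/2}$ from the Cauchy--Schwarz equality case rather than the anticommutator identity; your claim that $H$ is traceless uses $\mathbb{1}\notin\Mcal$, which the no-active-dependency hypothesis guarantees.
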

		
		\begin{proof}
			Set $G=G_\Mcal$, $\mathbm{b}=\mathbm{b}_{\Mcal(\rho)}$, and
			\begin{equation}
				K(\mathbm{b}):=\max_{Q\ \mathrm{clique}}\sum_{P\in Q}|b_P|.
			\end{equation}
			By Theorem~\ref{thm:mwis} and Proposition~\ref{prop:relaxation}, Eq.~\eqref{eq:graph-program} is the exact dual program.
			Positive homogeneity of $\alpha_{\mathbm{z}}(G)$ and Eq.~\eqref{eq:antiblocker} give, for every $s\ge0$,
			\begin{equation}
				\{\mathbm{z}\ge0:\alpha_{\mathbm{z}}(G)\le s\}=s\cdot\conv\{\mathbm{1}_Q:Q\text{ is a clique of }G\}.
			\end{equation}
			For a feasible pair $(\mathbm{y},\mu)$, set $s=1-|\mu|$. Since $\alpha_{|\mathbm{y}|}(G)\ge0$, condition $\alpha_{|\mathbm{y}|}(G)+|\mu|\le1$ gives $0\le\alpha_{|\mathbm{y}|}(G)\le s$. The scaled antiblocker identity therefore provides coefficients $\lambda_Q\ge0$ with $\sum_Q\lambda_Q=1$ such that
			\begin{equation}
				|\mathbm{y}|=s\sum_Q\lambda_Q\mathbm{1}_Q.
			\end{equation}
			It follows that
			\begin{equation}
				\mathbm{b}^\top\mathbm{y}+\mu\le|\mathbm{b}|^\top|\mathbm{y}|+|\mu|=s\sum_Q\lambda_Q\sum_{P\in Q}|b_P|+|\mu|\le sK(\mathbm{b})+|\mu|\le\max\big(1,K(\mathbm{b})\big).
			\end{equation}
			The final inequality uses $s+|\mu|=1$, making $sK(\mathbm{b})+|\mu|$ a convex combination of $K(\mathbm{b})$ and $1$.
			Conversely, $(\mathbm{y},\mu)=(\mathbm{0},1)$ is feasible and attains $1$.
			For any clique $Q$, define $y_P^{(Q)}=\sgn(b_P)$ for $P\in Q$ and $y_P^{(Q)}=0$ otherwise.
			Every independent set meets $Q$ in at most one node, so $\alpha_{|\mathbm{y}^{(Q)}|}(G)\le1$.
			Thus $(\mathbm{y}^{(Q)},0)$ is feasible and attains $\sum_{P\in Q}|b_P|$.
			The upper and lower bounds coincide, proving the equality in Eq.~\eqref{eq:closed-form-app}.
			
			For each clique $Q=\{P_1,\dots,P_{|Q|}\}$, take the restricted expectation vector $\mathbm{b}_Q=(b_{P_1},\dots,b_{P_{|Q|}})^\top\in\mathbb{R}^{|Q|}$. If $\mathbm{b}_Q=\mathbm{0}$, then $\|\mathbm{b}_Q\|_2\le1$ is immediate. Otherwise, denote the normalized vector $\bar{\mathbm{b}}_Q=\mathbm{b}_Q/\|\mathbm{b}_Q\|_2$, whose components are $\bar b_P=b_P/\|\mathbm{b}_Q\|_2$. We define a Hermitian operator
			\begin{equation}
				\displaystyle
				A_Q(\bar{\mathbm{b}}_Q):=\sum_{P\in Q}\bar b_P P.
			\end{equation}
			The normalized coefficients satisfy $\sum_{P\in Q}\bar b_P^2=1$. Since $P^2=\mathbb{1}$ for every $P\in Q$ and distinct operators in $Q$ anticommute, we obtain
			\begin{equation}
				\displaystyle
				A_Q(\bar{\mathbm{b}}_Q)^2
				=\sum_{P\in Q}\bar b_P^2P^2
				+\sum_{\{P,R\}\subseteq Q}\bar b_P\bar b_R(PR+RP)
				=\left(\sum_{P\in Q}\bar b_P^2\right)\mathbb{1}
				=\mathbb{1}.
			\end{equation}
			By construction, the expectation value of this Hermitian operator is
			\begin{equation}
				\displaystyle
				\tr\!\big(A_Q(\bar{\mathbm{b}}_Q)\rho\big)
				=\sum_{P\in Q}\bar b_P\tr(P\rho)
				=\sum_{P\in Q}\frac{b_P^2}{\|\mathbm{b}_Q\|_2}
				=\|\mathbm{b}_Q\|_2.
			\end{equation}
			The variance of a Hermitian operator in a quantum state is nonnegative. Using $A_Q(\bar{\mathbm{b}}_Q)^2=\mathbb{1}$ and $\tr(\rho)=1$, we obtain
			\begin{equation}
				\displaystyle
				\operatorname{Var}_{\rho}\!\big[A_Q(\bar{\mathbm{b}}_Q)\big]
				=\tr\!\big(A_Q(\bar{\mathbm{b}}_Q)^2\rho\big)
				-\Big[\tr\!\big(A_Q(\bar{\mathbm{b}}_Q)\rho\big)\Big]^2
				=1-\|\mathbm{b}_Q\|_2^2\geq 0.
			\end{equation}
			Therefore $\sum_{P\in Q}b_P^2=\|\mathbm{b}_Q\|_2^2\le1$~\cite{cabello2014graph,xu2024bounding}. The Cauchy--Schwarz inequality gives
			\begin{equation}
				\displaystyle
				\sum_{P\in Q}|\tr(P\rho)|
				=\sum_{P\in Q}|b_P|
				\le\sqrt{|Q|}\,\|\mathbm{b}_Q\|_2
				\le\sqrt{|Q|}.
			\end{equation}
			This holds for all cliques, so the clique term in Eq.~\eqref{eq:closed-form-app} is at most $\sqrt{\clnum(G_\Mcal)}$. The remaining term $1$ obeys the same bound since a nonempty measurement set gives $G_\Mcal$ at least one vertex and hence $\clnum(G_\Mcal)\ge1$. Substitution into Eq.~\eqref{eq:closed-form-app} gives $\RoM_\Mcal(\rho)\le\sqrt{\clnum(G_\Mcal)}$.
			
			It remains to show that this bound is attainable. Let $Q$ be a maximum clique and set $q:=|Q|=\clnum(G_\Mcal)$. Define
			\begin{equation}
				\displaystyle A_Q:=q^{-1/2}\sum_{P\in Q}P.
			\end{equation}
			The same square-expansion argument gives $A_Q^2=\mathbb{1}$ because the coefficients $q^{-1/2}$ have unit Euclidean norm. The no-active-dependency hypothesis excludes $\mathbb{1}\in\Mcal$, so every operator in $Q$ and hence $A_Q$ is traceless. An involution with zero trace has a nonempty $+1$ eigenspace. Let $\rho_Q$ be any state supported on the $+1$ eigenspace. Then $A_Q\rho_Q=\rho_QA_Q=\rho_Q$.
			For each $P\in Q$, the $P$ term contributes $\{P,P\}=2\mathbb{1}$, whereas every $R\neq P$ contributes $\{P,R\}=0$. Therefore
			\begin{equation}
				\displaystyle
				\{P,A_Q\}
				=q^{-1/2}\{P,P\}
				+q^{-1/2}\sum_{R\in Q\setminus\{P\}}\{P,R\}
				=2q^{-1/2}\mathbb{1}.
			\end{equation}
			Using $A_Q\rho_Q=\rho_QA_Q=\rho_Q$ and cyclicity of the trace,
			\begin{equation}
				\displaystyle
				\tr\!\big(\{P,A_Q\}\rho_Q\big)
				=\tr(PA_Q\rho_Q)+\tr(A_QP\rho_Q)
				=2\tr(P\rho_Q).
			\end{equation}
			Comparing the last two equations gives $\tr(P\rho_Q)=q^{-1/2}$ for every $P\in Q$. Consequently,
			\begin{equation}
				\displaystyle
				\sum_{P\in Q}|\tr(P\rho_Q)|
				=q\,q^{-1/2}
				=\sqrt q
				=\sqrt{\clnum(G_\Mcal)}.
			\end{equation}
			Eq.~\eqref{eq:closed-form-app} therefore attains the upper bound. Every state supported on the $+1$ eigenspace satisfies the same calculation, so the saturating state need not be unique when that eigenspace is degenerate.
		\end{proof}
		
		\begin{proposition}[Universal ceiling]
			\label{prop:extremal}
			If $\Mcal$ defined on $n$-qubit Hilbert space has no active dependencies and $G_\Mcal$ is perfect,
			\begin{equation}
				\max_\Mcal\ \sup_\rho\RoM_\Mcal(\rho)=\sqrt{2n+1}.
				\label{eq:extremal}
			\end{equation}
			Moreover, $\sup_\rho\RoM_\Mcal(\rho)=\sqrt{2n+1}$ if and only if $\clnum(G_\Mcal)=2n+1$, i.e. $\Mcal$ contains a maximum set of $2n+1$ pairwise anticommuting Pauli operators.
			
		\end{proposition}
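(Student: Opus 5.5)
We reduce everything to Theorem~\ref{thm:solvable}, which for every admissible $\Mcal$ (no active dependencies, $G_\Mcal$ perfect) gives $\sup_\rho\RoM_\Mcal(\rho)=\sqrt{\clnum(G_\Mcal)}$. The upper bound holds for every state, and equality is attained by a state supported on the $+1$ eigenspace of $A_Q$ for a maximum clique $Q$. Two consequences follow directly. The supremum over $\rho$ is an attained maximum. The statement becomes purely combinatorial: we must show $\max_\Mcal\clnum(G_\Mcal)=2n+1$ over admissible $\Mcal$, and characterize when the value $2n+1$ occurs.

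First we establish the upper bound $\clnum(G_\Mcal)\le 2n+1$ for any set of $n$-qubit Pauli operators. This is the standard bound on pairwise anticommuting Pauli families already cited in the main text~\cite{sarkar2021sets}. We would include a self-contained sketch in the binary symplectic picture of Appendix~\ref{app:binary-symplectic}. For a clique $Q=\{p_1,\dots,p_q\}\subseteq\Ftwo^{2n}$, the Gram matrix $\Omega_Q=M_Q^\top J M_Q$ equals $\mathbb{J}-I$ over $\Ftwo$, where $\mathbb{J}$ is the all-ones matrix.
\begin{itemize}
\item When $q$ is even, $\mathbb{J}-I$ is invertible over $\Ftwo$, since $(\mathbb{J}-I)^2=(q-2)\mathbb{J}+I=I$. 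Hence $\operatorname{rank} M_Q\ge q$, which gives $q\le 2n$.
\item When $q$ is odd, we drop one element and apply the even case, obtaining $q-1\le 2n$.
\end{itemize}
Together these give $q\le 2n+1$.

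Combined with Theorem~\ref{thm:solvable}, this yields $\sup_\rho\RoM_\Mcal(\rho)\le\sqrt{2n+1}$ for every admissible $\Mcal$.

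Second, for attainment we exhibit the JW set $\Gamma_n$. Its frustration graph is $K_{2n+1}$, which is perfect. Its only minimal dependency is $\prod_i\gamma_i\propto\mathbb{1}$, and this dependency contains anticommuting pairs. To justify that there is no other dependency, note that $\gamma_1,\dots,\gamma_{2n}$ have linearly independent symplectic vectors (this is the even case above). The kernel of $M_\Gamma$ is therefore at most one-dimensional, and it is spanned by the all-ones vector. Hence $\Gamma_n$ has no active dependencies, and Theorem~\ref{thm:solvable} gives capacity $\sqrt{2n+1}$. This proves Eq.~\eqref{eq:extremal}.

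Third, for the ``if and only if'' clause we use that $\sup_\rho\RoM_\Mcal(\rho)=\sqrt{\clnum(G_\Mcal)}$ holds exactly, not merely as an upper bound. The map $c\mapsto\sqrt c$ is injective, so the capacity equals $\sqrt{2n+1}$ iff $\clnum(G_\Mcal)=2n+1$. By the clique bound, this is the same as $\Mcal$ containing a maximum-size anticommuting family.

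The main obstacle is not analytic. It is making the clique upper bound rigorous without relying on the citation, specifically the rank argument over $\Ftwo$ for $\mathbb{J}-I$ and the parity case split. A secondary check is confirming that $\Gamma_n$ has no dependency other than the full product; the rank argument above settles this.
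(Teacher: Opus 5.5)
Your proposal is correct and follows the paper's proof. You reduce to $\sup_\rho\RoM_\Mcal(\rho)=\sqrt{\clnum(G_\Mcal)}$ via Theorem~\ref{thm:solvable}, bound every clique through the $\Ftwo$-rank of the symplectic Gram matrix $\Omega_Q$, attain the ceiling with the JW set, and obtain the equivalence from injectivity of $c\mapsto\sqrt{c}$. The only cosmetic difference is that you get the even case from $(\mathbb{J}-I)^2=I$ and the odd case by deleting one element, whereas the paper computes $\ker_{\Ftwo}(\Omega_Q)$ directly and also shows that every $(2n+1)$-clique carries a unique, inactive dependency.
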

		
		\begin{proof}
			
			For any measurement set $\Mcal$ in the solvable regime, the upper bound and the saturation statement in Theorem~\ref{thm:solvable} give the witness capacity:
			\begin{equation}
				\displaystyle
				\sup_\rho\RoM_\Mcal(\rho)
				=\sqrt{\clnum(G_\Mcal)}.
			\end{equation}
			It therefore suffices to prove that every clique of an $n$-qubit Pauli frustration graph has at most $2n+1$ nodes, and then to exhibit a solvable measurement set attaining this size.
			
			Let $Q=\{P_1,\ldots,P_k\}$ be a clique of $G_\Mcal$. Write $\mathbm{v}_i\in\Ftwo^{2n}$ for the binary symplectic vector of $P_i$, and collect these vectors as columns of
			\begin{equation}
				\displaystyle
				M_Q:=(\mathbm{v}_1\ \cdots\ \mathbm{v}_k)\in\Ftwo^{2n\times k}.
			\end{equation}
			Let $J$ denote the standard symplectic matrix. The symplectic Gram matrix $\Omega_Q=M_Q^\top J M_Q$ has entries $(\Omega_Q)_{ij}=v_i^\top Jv_j$. Its diagonal entries are $0$ since the symplectic form is alternating, while the off-diagonal entries equal $1$ as operators in $Q$ pairwise anticommute. Hence
			\begin{equation}
				\displaystyle
				\Omega_Q=J_k+\mathbb{1}_k,
			\end{equation}
			where $J_k$ is the $k\times k$ all-ones matrix. Matrix rank cannot increase under multiplication, so
			\begin{equation}
				\displaystyle
				\operatorname{rank}_{\Ftwo}(\Omega_Q)
				\le\operatorname{rank}_{\Ftwo}(M_Q)
				\le2n.
			\end{equation}
			
			We now compute the rank of $J_k+\mathbb{1}_k$. For any $\mathbm{x}\in\Ftwo^k$, let $s:=\mathbm{1}^\top\mathbm{x}\in\Ftwo$. Since $J_k\mathbm{x}=s\mathbm{1}$,
			\begin{equation}
				\displaystyle
				(J_k+\mathbb{1}_k)\mathbm{x}
				=s\mathbm{1}+\mathbm{x}.
			\end{equation}
			If $\mathbm{x}$ lies in the kernel, this equation forces $\mathbm{x}=s\mathbm{1}$. For $s=0$ this gives $\mathbm{x}=\mathbm{0}$. For $s=1$ it gives $\mathbm{x}=\mathbm{1}$, which is consistent with $s=\mathbm{1}^\top\mathbm{x}$ when $k$ is odd. Therefore
			\begin{equation}
				\displaystyle
				\ker_{\Ftwo}(\Omega_Q)
				=
				\begin{cases}
					\{\mathbm{0}\},&k\text{ even},\\
					\{\mathbm{0}, \mathbm{1}\},&k\text{ odd}.
				\end{cases}
			\end{equation}
			The rank-nullity theorem then gives
			\begin{equation}
				\displaystyle
				\operatorname{rank}_{\Ftwo}(\Omega_Q)
				=
				\begin{cases}
					k,&k\text{ even},\\
					k-1,&k\text{ odd}.
				\end{cases}
			\end{equation}
			Combining this identity with $\operatorname{rank}_{\Ftwo}(\Omega_Q)\le2n$ gives $k\le2n$ when $k$ is even and $k\le2n+1$ when $k$ is odd. Thus every clique satisfies $k\le2n+1$, and consequently $\clnum(G_\Mcal)\le2n+1$~\cite{sarkar2021sets}.
			
			We then characterize the equality case. Suppose $k$ is odd and $k=2n+1$. The preceding calculation gives $\operatorname{rank}_{\Ftwo}(\Omega_Q)=k-1=2n$. Together with
			\begin{equation}
				\displaystyle
				2n=\operatorname{rank}_{\Ftwo}(\Omega_Q)
				\le\operatorname{rank}_{\Ftwo}(M_Q)
				\le2n,
			\end{equation}
			leads to $\operatorname{rank}_{\Ftwo}(M_Q)=2n$. Since $M_Q$ has $2n+1$ columns, rank-nullity gives $\dim_{\Ftwo}\ker_{\Ftwo}(M_Q)=1$. Moreover, $M_Q\mathbm{x}=\mathbm{0}$ gives
			\begin{equation}
				\displaystyle
				\Omega_Q\mathbm{x}=M_Q^\top J M_Q\mathbm{x}=\mathbm{0},
			\end{equation}
			i.e. $\ker_{\Ftwo}(M_Q)\subseteq\ker_{\Ftwo}(\Omega_Q)$. Note that $\dim_{\Ftwo}\ker_{\Ftwo}(M_Q)=1$, and $\ker_{\Ftwo}(\Omega_Q)=\{\mathbm{0}, \mathbm{1}\}$. Hence
			\begin{equation}\label{eq: overal dependency}
				\displaystyle
				\ker_{\Ftwo}(M_Q)=\{\mathbm{0}, \mathbm{1}\},
				\qquad
				M_Q\mathbm{1}=\sum_{i=1}^{2n+1}v_i=\mathbm{0}.
			\end{equation}
			Addition of binary symplectic vectors represents multiplication of Pauli operators modulo phase. Eq.\eqref{eq: overal dependency} gives a sign dependency $\prod_{i=1}^{2n+1}P_i\propto\mathbb{1}$ over all $2n+1$ nodes in the clique. Because $\ker_{\Ftwo}(M_Q)$ contains only $\mathbm{0}$ and $\mathbm{1}$, this is the unique nontrivial dependency supported on $Q$. Its full support is pairwise anticommuting rather than commuting, so the dependency is inactive.
			
			The canonical JW set $\Gamma=\{\gamma_i\}_{i=1}^{2n+1}$ realizes $G_\Gamma=K_{2n+1}$~\cite{brauer1935spinors,ipek2026phasespace}. Its unique dependency is $\prod_i\gamma_i\propto\mathbb{1}$ and is inactive by the preceding argument. Since complete graphs are perfect, $\Gamma$ lies in the solvable regime and has $\clnum(G_\Gamma)=2n+1$. It therefore attains Eq.~\eqref{eq:extremal}. Conversely, for any solvable $\Mcal$, the identity $\sup_\rho\RoM_\Mcal(\rho)=\sqrt{\clnum(G_\Mcal)}$ reaches $\sqrt{2n+1}$ exactly when $\clnum(G_\Mcal)=2n+1$, equivalently when $\Mcal$ contains such a maximum anticommuting clique.
			
		\end{proof}

		\section{Squared Pauli profiles and stabilizer entropy}
		\label{app:sre}
		
		For a Pauli measurement set $\Mcal$ and an $n$-qubit state $\rho$, write $x_P(\rho)=\tr(P\rho)^2$ for $P\in\Mcal$.
		For $\Mcal=\{P_1,\ldots,P_m\}$, the associated squared expectation vector is $\mathbm{x}_{\Mcal(\rho)}:=(x_{P_1}(\rho),\ldots,x_{P_m}(\rho))^\top\in\mathbb{R}_{\ge0}^m$. This vector supports two complementary questions. A weighted sum $\sum_Pw_Px_P$ probes concentration along a chosen collection of Pauli directions and leads to a graph optimization, whereas the self-collision $\sum_Px_P^2$ probes the global concentration of the profile and leads to stabilizer R\'enyi entropy. The connection between these two viewpoints is that pure-stabilizer profiles are supported on independent sets of the same frustration graph.
		Squaring removes the eigenvalue signs: the squared profile of a pure stabilizer state is the incidence vector of the Pauli operators in $\Mcal$ that belong, up to sign, to its stabilizer group.
		This support is an independent set of $G_\Mcal$, while every independent set extends to a maximal commuting Pauli group and is contained in the squared profile of one of its joint stabilizer eigenstates.
		Since $\sum_{P\in\Mcal}w_P\tr(P\sigma)^2$ is convex in $\sigma$ for nonnegative weights, it follows that, for every $\mathbm{w}:=(w_{P_1},\ldots,w_{P_m})^\top\in\mathbb{R}_{\ge0}^m$,
		\begin{equation}
			\max_{\sigma\in\STAB_n}\sum_{P\in\Mcal}w_P\tr(P\sigma)^2=\alpha_{\mathbm{w}}(G_\Mcal).
			\label{eq:quadratic-stabilizer}
		\end{equation}
		Equation~\eqref{eq:quadratic-stabilizer} is the first point of contact with the graph framework: the independent-set geometry determines the stabilizer optimum of every weighted squared-expectation functional. Unlike the linear expectation vector used in the reduced robustness, the squared expectation vector retains only the support of a stabilizer assignment and discards its signs. Consequently, Pauli sign dependencies impose no additional constraint on this quadratic optimum.
		
		The corresponding optimization over all quantum states is the weighted beta number
		\begin{equation}
			\beta(G_\Mcal,\mathbm{w}):=\max_\rho\sum_{P\in\Mcal}w_P\tr(P\rho)^2\ge\alpha_{\mathbm{w}}(G_\Mcal),
			\label{eq:weighted-beta}
		\end{equation}
		which is independent of the Pauli realization of $G_\Mcal$~\cite{xu2024bounding,xu2025simultaneous}.
		A graph is $\hbar$-perfect when equality holds in Eq.~\eqref{eq:weighted-beta} for every nonnegative weight vector; every perfect graph is $\hbar$-perfect~\cite{xu2025simultaneous}.
		Consequently, a strict inequality in Eq.~\eqref{eq:weighted-beta} certifies nonstabilizerness, whereas an $\hbar$-perfect graph admits no witness from this quadratic family.
		Thus $\alpha_{\mathbm{w}}$ and $\beta$ compare the largest stabilizer and quantum concentrations of the squared expectation vector in a prescribed nonnegative direction $\mathbm{w}$. This directional comparison should be distinguished from the self-collision of that vector considered below.
		This graph collapse is parallel to, but distinct from, contextuality in the Cabello--Severini--Winter framework: a perfect exclusivity graph has coincident graph stable-set polytope $\stab(G)$ and quantum theta body $\mathrm{TH}(G)$, while edges of $G_\Mcal$ encode Pauli anticommutation rather than exclusive events~\cite{cabello2014graph}.
		Perfect frustration graphs therefore do not imply a noncontextual Pauli subtheory; instead, together with the absence of active dependencies, they give a tractable measurement structure for detecting magic, which contextuality supplies as a necessary resource for magic-state quantum computation in the settings studied by Howard \emph{et al.}~\cite{howard2014contextuality}.
		
		We now specialize to the complete Pauli set $\Pcal_n=\{\mathbb{1},X,Y,Z\}^{\otimes n}$ and write $d=2^n$.
		For any $n$-qubit state $\rho$, there is
		\begin{equation}
			\sum_{P\in\Pcal_n}x_P(\rho)=d\,\tr(\rho^2),
			\label{eq:parseval}
		\end{equation}
		and hence defines the normalized Pauli distribution $\Xi_P(\rho):=x_P(\rho)/[d\,\tr(\rho^2)]$.
		The frustration graph is $G_{\Pcal_n}=K_1\sqcup\mathrm{Sp}(2n,2)$, where $K_1$ is the isolated identity node and $\mathrm{Sp}(2n,2)$ has the nonzero binary Pauli vectors as nodes, adjacent when their symplectic product is one~\cite{brouwer2012spectra}.
		Its maximum independent sets are the maximal abelian Pauli supports, each containing $d$ operators~\cite{sarkar2021sets} that 
		\begin{equation}
			\alpha(G_{\Pcal_n})=d=2^n.
			\label{eq:alpha-full}
		\end{equation}
		For a pure stabilizer state, $\Xi$ is therefore the uniform distribution on one of these maximum independent sets: it equals $1/d$ on the $d$ Pauli operators in the stabilizer support and vanishes elsewhere. Conversely, a pure-state Pauli distribution of this form identifies a joint eigenstate of a maximal abelian Pauli group and hence a pure stabilizer state. The family of maximum independent sets thus specifies not only the stabilizer value $d$, but also the extremal distributional shape against which the collision of a general pure-state Pauli profile can be compared.
		In particular, the all-one weight vector gives
		\begin{equation}
			\beta(G_{\Pcal_n},\mathbm{1})=\max_\rho d\,\tr(\rho^2)=d=\alpha(G_{\Pcal_n}),
			\label{eq:beta-uniform-full}
		\end{equation}
		that uniform weights cannot reveal $\hbar$-imperfection.
		
		The $\hbar$-perfectness of the full Pauli graph is dimension dependent.
		For one qubit its nonidentity component is $K_3$ and is perfect; for two qubits the $15$-node nonidentity graph is $\hbar$-perfect although it is not perfect~\cite{xu2025simultaneous}.
		For $n\ge3$, $G_{\Pcal_n}$ contains the $\hbar$-imperfect anti-heptagon as an induced subgraph, for example through the three-qubit Pauli operators $\{IIX,IXI,IIY,IYY,XZZ,YYZ,ZYX\}$, and $\hbar$-perfectness is inherited by induced subgraphs~\cite{xu2024bounding,xu2025simultaneous}.
		
		The stabilizer R\'enyi entropy is nonlinear in the squared profile and depends on its self-collision $\sum_Px_P^2$: profiles concentrated on fewer Pauli directions have larger collision, whereas profiles spread over more directions have smaller collision. For pure states, the maximum-independent-set distributions identified above provide the stabilizer reference scale for this comparison.
		For a pure state $\psi=\ket{\psi}\!\bra{\psi}$, define
		\begin{equation}
			A_2(\psi):=\frac1d\sum_{P\in\Pcal_n}\tr(P\psi)^4=\frac1d\sum_Px_P(\psi)^2,
			\qquad M_2(\psi):=-\log A_2(\psi)
			\label{eq:sre-definition}
		\end{equation}
		as in Ref.~\cite{leone2022stabilizer}.
		Equation~\eqref{eq:parseval} gives $\sum_Px_P(\psi)=d$, and $x_P^2\le x_P$ gives $\sum_Px_P(\psi)^2\le d$.
		Equality holds precisely when the squared profile is the indicator of a maximal abelian Pauli support, equivalently when $\psi$ is a pure stabilizer state.
		With the collision entropy $H_2(\Xi):=-\log\sum_P\Xi_P^2$, Eqs.~\eqref{eq:alpha-full} and~\eqref{eq:sre-definition} yield
		\begin{equation}
			\begin{aligned}
				M_2(\psi)
				&=H_2(\Xi)-\log\alpha(G_{\Pcal_n})\\
				&=-\log\!\left[\alpha(G_{\Pcal_n})\sum_{P\in\Pcal_n}\Xi_P(\psi)^2\right],
			\end{aligned}
			\label{eq:sre-pure}
		\end{equation}
		Hence $\alpha(G_{\Pcal_n})^{-1}$ is the collision probability of a uniform maximum-independent-set distribution. Equation~\eqref{eq:sre-pure} measures the collision-entropy excess of $\Xi(\psi)$ relative to this graph-defined stabilizer scale. In particular, $M_2(\psi)=0$ exactly when $\Xi(\psi)$ is uniform on a maximum independent set of $G_{\Pcal_n}$, equivalently when $\psi$ is a pure stabilizer state. The graph enters this interpretation through the family and size of the extremal stabilizer supports; $M_2$ itself is not asserted to be an invariant determined by the abstract graph alone.
		
		For a mixed state $\rho$, define $A_2(\rho):=d^{-1}\sum_P\tr(P\rho)^4$, $S_2(\rho):=-\log\tr(\rho^2)$, and $\widetilde M_2(\rho):=-\log[A_2(\rho)/\tr(\rho^2)^3]$.
		The definition of $\Xi$ and Eq.~\eqref{eq:parseval} give the exact identity
		\begin{equation}
			\widetilde M_2(\rho)=H_2(\Xi)-\log\alpha(G_{\Pcal_n})-S_2(\rho),
			\label{eq:sre-mixed}
		\end{equation}
		which reduces to Eq.~\eqref{eq:sre-pure} for pure states.
		This is an algebraic decomposition rather than a nonnegative mixed-state magic measure; for example, $\widetilde M_2(\mathbb{1}/d)=-2\log d$.
		The additional term $-S_2(\rho)$ makes explicit why the pure-state support interpretation does not extend directly: spreading of the Pauli distribution can arise from ordinary mixing as well as from nonstabilizerness.
		
		For a maximum independent set $\Scal$ of $G_{\Pcal_n}$, let $\Delta_\Scal(\rho):=d^{-1}\sum_{P\in\Scal}\tr(P\rho)P$ denote dephasing in its joint stabilizer basis.
		Then $\sum_{P\in\Scal}x_P(\rho)=d\,\tr[\Delta_\Scal(\rho)^2]$, and dephasing cannot increase purity, giving
		\begin{equation}
			\alpha_{\mathbm{x}_{\Pcal_n(\rho)}}(G_{\Pcal_n})=d\max_{\Scal\in\Ind_{\max}(G_{\Pcal_n})}\tr[\Delta_\Scal(\rho)^2]\le d\,\tr(\rho^2).
			\label{eq:purity-relax}
		\end{equation}
		Thus the state-dependent MWIS restores a direct graph optimization: it selects the stabilizer basis retaining the largest dephased purity. This quantity and $\widetilde M_2$ are complementary summaries of the same squared profile, but Eq.~\eqref{eq:purity-relax} does not turn the mixed-state moment identity in Eq.~\eqref{eq:sre-mixed} into a stabilizer bound. 
		Establishing such a bound is still a missing step for a graph theoretic mixed-state resource criterion.
		
		\makeatletter
		\let\NAT@bibsetnum\NATx@bibsetnum
		\let\app@label\label
		\def\label#1{%
			\def\app@labelname{#1}%
			\def\app@lastbibitem{LastBibItem}%
			\ifx\app@labelname\app@lastbibitem
			\app@label{AppendixLastBibItem}%
			\else
			\app@label{#1}%
			\fi}
		\makeatother
		
		\renewcommand{\refname}{}
		
		\makeatletter
		\immediate\write\@bibunitaux{\string\citation{BFLongBibliographyControl}}
		\makeatother
		\putbib[references]
	\end{bibunit}
	
\end{document}